\def\ps@headings{%
\def\@oddhead{\mbox{}\scriptsize\rightmark \hfil \thepage}%
\def\@evenhead{\scriptsize\thepage \hfil \leftmark\mbox{}}%
\def\@oddfoot{}%
\def\@evenfoot{}}
\makeatother \pagestyle{headings}
\newtheorem{theorem}{\textbf{Theorem}}
\newtheorem{corollary}[theorem]{\textbf{Corollary}}
\newtheorem{definition}[theorem]{\textbf{Definition}}
\newtheorem{lemma}[theorem]{\textbf{Lemma}}
\newtheorem{example}[theorem]{\textbf{Example}}
\DeclareMathOperator{\tr}{tr}
\DeclareMathOperator{\swt}{swt}
\DeclareMathOperator{\wt}{wt}
\newcommand{\C}{\mathbf{C}}
\newcommand{\F}{\mathbf{F}}
\newcommand{\ket}[1]{|#1\rangle}
\newcommand{\acal}[2]{\langle #1\mid #2\rangle_a}
\newcommand{\sdual}{{\perp_s}}
\newcommand{\hdual}{{\perp_h}}
\newcommand{\adual}{{\perp_a}}
\begin{document}

\title{Propagation Rules of Subsystem Codes}
\author{
\authorblockN{Salah A. Aly}
\authorblockA{Department  of Computer Science \\
Texas A\&M University\\
College Station, TX 77843, USA \\
Email: salah@cs.tamu.edu}
 } \maketitle
\begin{abstract}We demonstrate propagation rules of subsystem code constructions by
extending, shortening and combining given subsystem codes. Given an
$[[n,k,r,d]]_q$ subsystem code, we drive new subsystem codes with
parameters $[[n+1,k,r,\geq d]]_q$, $[[n-1,k+1,r,\geq d-1]]_q$,
$[[n,k-1,r+1,d]]_q$. We present the short subsystem codes. The
interested readers shall consult our companion papers  for upper and
lower bounds on subsystem codes parameters, and introduction,
 trading dimensions, families, and references on
subsystem codes~\cite{aly08f,aly08a,aly06c} and references therein.
\end{abstract}

\bigbreak

\noindent \textbf{Subsystem Codes.}
 Let $\mathcal{H}$ be the Hilbert space $C^{q^n}=\C^q \otimes \C^q \otimes ...
\otimes \C^q$. Let $Q$ be a quantum code  such that
$\mathcal{H}=Q\oplus Q^\perp$, where $Q^\perp$ is the orthogonal
complement of $Q$. Recall  definition of the error model acting in
qubits~\cite{calderbank98,aly06c}. We can define the subsystem code
$Q$ as follows
\begin{definition}
An $[[n,k,r,d]]_q$ subsystem code is a decomposition of the subspace
$Q$ into a tensor product of two vector spaces A and B such that
$Q=A\otimes B$, where  $\dim A=k$ and $\dim B= r$. The code $Q$ is
able to detect all errors  of weight less than $d$ on subsystem $A$.
\end{definition}

Subsystem codes can be constructed  from the classical codes  over
$\F_q$ and $\F_{q^2}$. The Euclidean construction of subsystem code
is given as follows~\cite{aly08f,aly06c}.
\begin{lemma}[Euclidean Construction]\label{lem:css-Euclidean-subsys}
If $C$ is a $k'$-dimensional $\F_q$-linear code of length $n$ that
has a $k''$-dimensional subcode $D=C\cap C^\perp$ and $k'+k''<n$,
then there exists an
$$[[n,n-(k'+k''),k'-k'',\wt(D^\perp\setminus C)]]_q$$
subsystem code.
\end{lemma}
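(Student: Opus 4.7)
The natural setting is the symplectic representation of subsystem codes over $\F_q$: such a code is specified by an additive subspace $G\subseteq \F_q^{2n}$ (the gauge subspace), its symplectic dual $G\sdual$, and the isotropic part $S=G\cap G\sdual$ (the stabilizer). From these the parameters can be read off as $\dim G=n-k+r$ and $\dim S=n-k-r$, while the minimum distance is $d=\swt(S\sdual\setminus G)$.

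The key choice is to use the Euclidean (CSS-type) embedding
\[
G=C\times C=\{(a\mid b):a,b\in C\}\subseteq \F_q^{2n}.
\]
First, I would compute $G\sdual$. Since the symplectic form pairs $(a\mid b)$ with $(c\mid d)$ as $a\cdot d-b\cdot c$, a direct calculation gives $G\sdual=C^\perp\times C^\perp$. Second, intersecting yields $S=G\cap G\sdual=(C\cap C^\perp)\times (C\cap C^\perp)=D\times D$, with $\dim S=2k''$ and $\dim G=2k'$. Third, solving the two linear relations $n-k+r=2k'$ and $n-k-r=2k''$ gives $k=n-(k'+k'')$ and $r=k'-k''$; the hypothesis $k'+k''<n$ guarantees $k\geq 1$, so a genuine encoded subsystem exists.

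The fourth and substantive step is the distance. Undetectable nontrivial errors correspond to elements of
\[
S\sdual\setminus G=(D^\perp\times D^\perp)\setminus (C\times C),
\]
and I must minimize the symplectic weight over this set. Because $\swt(a\mid b)$ counts coordinates $i$ with $(a_i,b_i)\ne (0,0)$, zeroing out either component can only decrease the weight. Hence the minimum is attained on one-sided elements $(a\mid 0)$ or $(0\mid b)$ whose nonzero component lies in $D^\perp\setminus C$, giving $d=\wt(D^\perp\setminus C)$.

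The main obstacle is the distance calculation: one must verify that no genuinely two-sided element $(a\mid b)$ with both $a,b\ne 0$ can have smaller symplectic weight while still lying outside $G$. This reduces to the observation that $(a\mid 0)$ automatically belongs to $S\sdual$ whenever $a\in D^\perp$, and fails to lie in $G$ precisely when $a\notin C$, so truncating any two-sided element to one side produces a candidate in $S\sdual\setminus G$ of no larger weight. Everything else is a mechanical dimension count inside the symplectic picture.
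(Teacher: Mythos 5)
Your proof is correct and follows essentially the route the paper intends: instantiate the general construction of Theorem~\ref{th:oqecfq} with the gauge code $C\times C\subseteq\F_q^{2n}$, so that $(C\times C)^\sdual=C^\perp\times C^\perp$ (the trace-symplectic and plain symplectic duals coincide for $\F_q$-linear codes), the stabilizer is $D\times D$ with $\dim = 2k''$, the dimension count gives $k=n-(k'+k'')$ and $r=k'-k''$, and your one-sided truncation argument correctly yields $\swt\bigl((D^\perp\times D^\perp)\setminus(C\times C)\bigr)=\wt(D^\perp\setminus C)$. The paper itself defers this proof to~\cite{aly08f,aly06c}, which argue in exactly this way, so there is nothing substantive to add.
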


\section{Subsystem Codes vers. Co-subsystem Codes}\label{sec:tradingdimensions}
In this section we show how one can trade the dimensions of
subsystem and co-subsystem to obtain new codes from a given
subsystem or stabilizer code. The results are obtained by exploiting
the symplectic geometry of the space. A remarkable consequence is
that nearly any stabilizer code yields a series of subsystem codes.

Our first result shows that one can decrease the dimension of the
subsystem and increase at the same time the dimension of the
co-subsystem while keeping or increasing the minimum distance of the
subsystem code.

\begin{theorem}\label{th:shrinkK}
Let $q$ be a power of a prime~$p$. If there exists an
$((n,K,R,d))_q$ subsystem code with $K>p$ that is pure to $d'$, then
there exists an $((n,K/p,pR,\geq d))_q$ subsystem code that is pure
to $\min\{d,d'\}$. If a pure $((n,p,R,d))_q$ subsystem code exists,
then there exists a $((n,1,pR,d))_q$ subsystem code.
\end{theorem}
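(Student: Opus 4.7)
My plan is to pass to the symplectic/stabilizer picture, where an $((n,K,R,d))_q$ subsystem code is encoded by an additive gauge subgroup $G\subseteq\F_q^{2n}$ (Pauli operators modulo phases), with stabilizer $S=G\cap G^{\sdual}$. The subsystem and co-subsystem parameters $K$ and $R$ are read off from the $\F_p$-dimensions of $S^{\sdual}/G$ and $G/S$ (each halved), the distance is the minimum symplectic weight in $S^{\sdual}\setminus G$, and pureness to $d'$ will be interpreted as: $G\setminus\{0\}$ contains no vector of symplectic weight less than $d'$.

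The construction will transfer one logical degree of freedom into the gauge. Since $K>p$, the symplectic quotient $S^{\sdual}/G$ has $\F_p$-dimension at least four, so a standard symplectic-basis argument produces a hyperbolic pair $v,w\in S^{\sdual}\setminus G$, linearly independent modulo $G$, with $\langle v,w\rangle_s=1$. I will take $G'=G+\F_p v+\F_p w$, observe that $|G'|=p^2|G|$, and compute $(G')^{\sdual}=G^{\sdual}\cap v^{\sdual}\cap w^{\sdual}$.

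The technical heart is showing that the stabilizer is preserved, i.e.\ $G'\cap(G')^{\sdual}=S$. Writing any element as $g+av+bw$ with $g\in G$, membership in $(G')^{\sdual}$ forces $g\in G^{\sdual}$ (so $g\in S$), and the orthogonality conditions with $v$ and $w$ collapse to $b=0$ and $a=0$ respectively, since $\langle v,w\rangle_s=1$ and $v,w\in G^{\sdual}$ annihilate $G$. Once $S'=S$ is in hand, the parameter count is automatic: $\dim(G'/S)$ grows by two (giving $R\mapsto pR$), $\dim(S^{\sdual}/G')$ shrinks by two (giving $K\mapsto K/p$), and any vector in $S^{\sdual}\setminus G'$ already lies in $S^{\sdual}\setminus G$, so $\swt(\cdot)\geq d$. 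For purity, a nonzero $g+av+bw\in G'$ with $(a,b)=(0,0)$ has weight at least $d'$ by hypothesis, and with $(a,b)\neq(0,0)$ lies in $S^{\sdual}\setminus G$ and so has weight at least $d$, yielding the claimed $\min\{d,d'\}$.

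The boundary case $K=p$ works the same way: $S^{\sdual}/G$ is itself a symplectic plane, the hyperbolic pair $v,w$ spans it, and the construction gives $G'=S^{\sdual}$, so $(G')^{\sdual}=S=S'$, $K'=1$, and $R'=pR$. The only care needed is that for a $K=1$ subsystem code the distance is read as the minimum weight in $S^{\sdual}\setminus S=(S^{\sdual}\setminus G)\cup(G\setminus S)$, and under the purity hypothesis both pieces have minimum weight at least $d$, with equality forced by the original distance. The main obstacle I anticipate is less mathematical than notational: fixing conventions for ``pure to $d'$'' and for the distance of a $K=1$ code compatibly with Aly's companion papers, after which the argument is straightforward symplectic bookkeeping.
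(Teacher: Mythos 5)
Your overall strategy is the right one (the paper itself gives no argument for Theorem~\ref{th:shrinkK}, deferring to the companion papers, and the construction intended there is exactly the one you attempt: absorb a hyperbolic pair of $A$-logical operators into the gauge group). However, your selection step has a genuine flaw. You draw the pair from ``the symplectic quotient $S^{\sdual}/G$,'' but that quotient carries no well-defined symplectic form: for the trace-symplectic form to descend to $S^{\sdual}/G$ one would need every element of $G$ to pair trivially with every element of $S^{\sdual}$, i.e.\ $G\subseteq (S^{\sdual})^{\sdual}=S$, which fails whenever $R>1$. Worse, your verification that the stabilizer is preserved explicitly invokes that ``$v,w\in G^{\sdual}$ annihilate $G$,'' a property that elements of $S^{\sdual}\setminus G$ need not have, since only $G^{\sdual}\subseteq S^{\sdual}$ holds (equality would force $G=S$). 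As written, the choice of $v,w$ does not support the computation of $G'\cap (G')^{\sdual}$: if $v,w\notin G^{\sdual}$, the cross terms $\langle v,x\rangle_s$, $\langle w,x\rangle_s$ with $x\in G$ do not vanish, and membership of $g+av+bw$ in $(G')^{\sdual}$ no longer forces $g\in S$ and $a=b=0$.

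The repair is to choose the pair where it actually lives, namely in $G^{\sdual}$. Since $S=G\cap G^{\sdual}$ is the radical of the form restricted to $G^{\sdual}$, the quotient $G^{\sdual}/S$ is a nondegenerate symplectic $\F_p$-space; its size is $|G^{\sdual}|/|S|=q^{2n}/(|G||S|)=K^2$, so its $\F_p$-dimension is $2\log_p K\ge 4$ because $K>p$ and $K$ is a power of $p$. Pick $v,w\in G^{\sdual}$ whose images in $G^{\sdual}/S$ form a hyperbolic pair; then $v,w$ genuinely annihilate $G$, are independent modulo $G$ (an element of $G\cap G^{\sdual}$ has zero image), and your computation $G'\cap(G')^{\sdual}=S$ together with the parameter, distance and purity bookkeeping goes through, including the boundary case $K=p$, where $G'=S^{\sdual}$ by cardinality and Theorem~\ref{th:oqecfq}(b) plus purity give distance exactly $d$. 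One further caveat: in the companion papers ``pure to $d'$'' refers to $\swt(D^{\sdual})\ge d'$ rather than to the gauge code $G$ as you interpreted it; with the standard definition, purity of the new code to $\min\{d,d'\}$ is immediate from $D'=D$, so the conclusion survives under either convention, but you should align the definition before finalizing the write-up.
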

\begin{proof}
See~\cite{aly08f,aly08a}
\end{proof}

Replacing $\F_p$-bases by $\F_q$-bases in the proof of the previous
theorem yields the following variation of the previous theorem for
$\F_q$-linear subsystem codes.
\begin{theorem}\label{th:FqshrinkR}
Let $q$ be a power of a prime $p$. If there exists a pure
$\F_q$-linear $[[n,k,r,d]]_q$ subsystem code with $r>0$, then there
exists a pure $\F_q$-linear $[[n,k+1,r-1,d]]_q$ subsystem code.
\end{theorem}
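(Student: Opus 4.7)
The plan is to pass to the symplectic description of the code and exchange a hyperbolic pair in the gauge quotient for two extra logical $\F_q$-dimensions, exactly as in the proof of Theorem~\ref{th:shrinkK} but working with $\F_q$-scalars throughout. Represent the input $[[n,k,r,d]]_q$ code by an $\F_q$-linear subspace $X\subseteq\F_q^{2n}$ under the trace-symplectic form $\langle\cdot,\cdot\rangle_s$, with stabilizer $Z=X\cap X\sdual$; the parameter formulas give $\dim_{\F_q}X=n-k+r$ and $\dim_{\F_q}Z=n-k-r$, so the induced form on $X/Z$ is non-degenerate of dimension $2r>0$. Purity of the given code says $\swt(y)\geq d$ for every nonzero $y\in Z\sdual$.

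Because $r>0$, I would pick $g\in X\setminus Z$, use non-degeneracy to produce $h\in X$ with $\langle g,h\rangle_s\neq 0$, and rescale $h$ by an element of $\F_q^{\times}$ to arrange $\langle g,h\rangle_s=1$. Then I would set
$$X'=\{x\in X:\langle x,g\rangle_s=\langle x,h\rangle_s=0\},$$
an $\F_q$-subspace of codimension $2$ in $X$. The key calculation is the identity $X'\cap X'\sdual=Z$: a dimension count shows $X'\sdual=X\sdual+\langle g,h\rangle$, so any $x$ in the intersection has the shape $x=z+\alpha g+\beta h$ with $z\in X\cap X\sdual=Z$, and imposing $\langle x,g\rangle_s=0=\langle x,h\rangle_s$ together with $\langle g,h\rangle_s=1$ forces $\alpha=\beta=0$, so $x=z\in Z$.

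Granting this identity, $(X',Z)$ is an $\F_q$-linear subsystem code with the same stabilizer but with $\dim X'=\dim X-2$, which by the dimension formulas translates to parameters $[[n,k+1,r-1,d']]_q$. Since the stabilizer is unchanged, its dual $Z\sdual$ is also unchanged, and the purity hypothesis on the original code transfers verbatim: every nonzero element still has symplectic weight $\geq d$, giving $d'\geq d$ and preserving purity. The main obstacle is the intersection identity $X'\cap X'\sdual=Z$; this is what forces the choice of a genuine hyperbolic pair $(g,h)$, and it is precisely where one must use $\F_q$-scaling in place of the $\F_p$-scaling of Theorem~\ref{th:shrinkK} if the output code is to remain $\F_q$-linear rather than merely additive.
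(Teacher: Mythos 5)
Your overall route is the intended one: the paper itself only refers to the companion papers for this proof, and the argument indicated there (see the remark preceding the theorem about replacing $\F_p$-bases by $\F_q$-bases) is exactly what you carry out — keep the stabilizer $Z=X\cap X^\sdual$ fixed, strip one hyperbolic pair from the gauge code $X$ using non-degeneracy of the induced form on $X/Z$, verify $X'\cap X'^\sdual=Z$, and read off the new parameters from the subsystem-code construction (Theorem~\ref{th:oqecfq}). Your identity $X'^\sdual=X^\sdual+\spann\{g,h\}$ and the elimination of $\alpha,\beta$ via $\langle g,h\rangle_s=1$ are correct, as is the parameter bookkeeping once $\dim_{\F_q}X'=\dim_{\F_q}X-2$ is granted.

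The one step that fails as literally written is the claim that $X'=\{x\in X:\langle x,g\rangle_s=\langle x,h\rangle_s=0\}$ is an $\F_q$-subspace of codimension $2$. The form $\langle\cdot|\cdot\rangle_s$ you invoke is the trace-symplectic form, which is only $\F_p$-bilinear with values in $\F_p$; the two conditions therefore cut out an additive subgroup of index at most $p^2$ in $X$, which for $q>p$ is neither of $\F_q$-codimension $2$ nor necessarily closed under $\F_q$-scalars. Carried through, the literal construction gives an additive $((n,pK,R/p,\ge d))_q$ code, not the asserted $\F_q$-linear $[[n,k+1,r-1,d]]_q$ code. The fix is what your closing sentence gestures at but the construction does not implement: impose the conditions for all $\F_q$-multiples of $g$ and $h$, equivalently work with the $\F_q$-bilinear symplectic form $\langle(a|b)|(a'|b')\rangle=a'\cdot b-a\cdot b'$, which on $\F_q$-linear codes defines the same dual as the trace form. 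Then $X'$ is the orthogonal complement in $X$ of the $\F_q$-span of the hyperbolic pair, the two $\F_q$-linear functionals are independent precisely because $\langle g,h\rangle\neq 0$ (if $h-\lambda g\in Z$ then $\langle g,h\rangle=0$), and the rest of your argument goes through verbatim. One further small point: to obtain distance exactly $d$ and purity of the new code, observe that a minimum-weight vector of $Z^\sdual\setminus X$, which has weight $d$ by purity of $Q$, still lies in $Z^\sdual\setminus X'$, so $d'\le d$ as well; your argument only establishes $d'\ge d$.
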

\begin{proof}
See~\cite{aly08f,aly08a}
\end{proof}
\goodbreak
\begin{theorem}[Generic methos]\label{cor:generic}
If there exists an ($\F_q$-linear) $[[n,k,d]]_q$ stabilizer code
that is pure to $d'$, then there exists for all $r$ in the range
$0\le r<k$ an ($\F_q$-linear) $[[n,k-r,r,\ge d]]_q$ subsystem code
that is pure to $\min\{d,d'\}$ .  If a pure ($\F_q$-linear)
$[[n,k,r,d]]_q$ subsystem code exists, then a pure ($\F_q$-linear)
$[[n,k+r,d]]_q$ stabilizer code exists.
\end{theorem}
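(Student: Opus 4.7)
The plan is to recognize an $[[n,k,d]]_q$ stabilizer code as an $[[n,k,0,d]]_q$ subsystem code and then obtain both implications by iterating the two theorems just proved: Theorem~\ref{th:shrinkK} moves dimension from the subsystem into the co-subsystem (needed for the forward implication), while Theorem~\ref{th:FqshrinkR} moves it back (needed for the converse).

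For the forward implication, treat the given stabilizer code as the $((n,q^k,1,d))_q$ subsystem code, pure to $d'$. With $q=p^m$, each application of Theorem~\ref{th:shrinkK} divides the subsystem dimension by $p$ and multiplies the co-subsystem dimension by $p$, preserving purity to $\min\{d,d'\}$. After $mr$ applications the parameters are $((n,q^{k-r},q^r,\geq d))_q$, i.e.\ $[[n,k-r,r,\geq d]]_q$, pure to $\min\{d,d'\}$. The constraint $r<k$ guarantees $K>p$ at every intermediate step, so only the non-degenerate branch of Theorem~\ref{th:shrinkK} is invoked. In the $\F_q$-linear setting I would instead use the $\F_q$-linear analog of Theorem~\ref{th:shrinkK}, produced by the same bases-replacement argument used to derive Theorem~\ref{th:FqshrinkR}, and iterate it $r$ times so that $\F_q$-linearity is preserved at every stage.

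For the converse, apply Theorem~\ref{th:FqshrinkR} $r$ successive times to the given pure $\F_q$-linear $[[n,k,r,d]]_q$ subsystem code. After the $i$-th application ($0\le i\le r$) the code has parameters $[[n,k+i,r-i,d]]_q$ and is still pure. At $i=r$ the co-subsystem dimension has been exhausted, so the result is a pure $\F_q$-linear $[[n,k+r,0,d]]_q$ subsystem code, which by definition is a pure $\F_q$-linear $[[n,k+r,d]]_q$ stabilizer code. The non-linear version follows the same way from the non-linear analog of Theorem~\ref{th:FqshrinkR}.

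The main conceptual point to watch is the purity parameter in the forward implication. A single application of Theorem~\ref{th:shrinkK} only guarantees purity to $\min\{d,d'\}$, so one might worry that successive iterations erode it further; but since $\min\{d,\min\{d,d'\}\}=\min\{d,d'\}$, the purity parameter stabilizes after the first step and is inherited by all subsequent iterations. The one non-routine step is justifying the $\F_q$-linear analog of Theorem~\ref{th:shrinkK} invoked above: I expect this to be the main obstacle, and it is where I would lean on the detailed arguments of \cite{aly08f,aly08a}.
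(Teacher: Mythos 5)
Your proposal is correct and is essentially the intended argument: the paper itself gives no in-text proof (it only cites \cite{aly08f,aly08a}), and the derivation there is exactly this iteration of the dimension-trading results, using the analog of Theorem~\ref{th:shrinkK} to shift subsystem dimension into the gauge for the forward implication and Theorem~\ref{th:FqshrinkR} repeatedly to exhaust the gauge for the converse. Your checks that $K>p$ holds at every intermediate step (so only the non-degenerate branch is used) and that the purity parameter stabilizes at $\min\{d,d'\}$ are the right points to verify, and deferring the $\F_q$-linear (and non-linear converse) analogs to the companion papers is consistent with how the paper itself handles them.
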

\begin{proof}
See~\cite{aly08f,aly08a}
\end{proof}
Using this theorem we can derive many families of subsystem codes
derived from families of stabilizer codes as shown in
Table~\ref{table:families}
\begin{figure*}[t]

\begin{center}
\begin{tabular}{|@{}c@{}||c|@{}c@{}|@{}c@{}|}
\hline\hline
Family & Stabilizer $[[n,k,d]]_q$ & Subsystem  $[[n,k-r,r,d]]_q$, $k>r\geq 0$ \\
\hline\hline
Short MDS & $[[n,n-2d+2,d]]_q$  & $[[n,n-2d+2-r,r,d]]_q$ \\
\hline
Hermitian Hamming & $[[n,n-2m,3]]_q$ & $m\ge 2$,  $[[n,n-2m-r,r,3]]_q$ \\
\hline
 Euclidean Hamming & $[[n,n-2m,3]]_q$ & $[[n,n-2m-r,r,3]]_q$ \\

\hline Melas  &$[[n,n-2m,\ge 3]]_q$  & $[[n,n-2m-r,r,\ge
3]]_q$\\
\hline Euclidean BCH & $[[n,n-2m\lceil(\delta-1)(1-1/q)\rceil,\geq \delta]]_q$& $[[n,n-2m\lceil(\delta-1)(1-1/q)\rceil-r,r,\geq \delta]]_q$\\
\hline Hermitian BCH & $[[n,n-2m\lceil(\delta-1)(1-1/q^2)\rceil,\geq
\delta]]_q$ &
 $[[n,n-2m\lceil(\delta-1)(1-1/q^2)\rceil-r,r,\geq \delta]]_q$
  \\
\hline Punctured MDS & $[[q^2-q\alpha, q^2-q\alpha-2\nu-2,\nu+2]]_q$  & $[[q^2-q\alpha, q^2-q\alpha-2\nu-2-r,r,\nu+2]]_q$\\
\hline Euclidean MDS & $[[n,n-2d+2]]_q$& $[[n,n-2d+2-r,r]]_q$\\
\hline Hermitian MDS & $[[q^2-s,q^2-s-2d+2,d]]_q$& $[[q^2-s,q^2-s-2d+2-r,r,d]]_q$\\
\hline
Twisted & $[[q^r,q^r-r-2,3]]_q$ &$[[q^r,q^r-r-2-r,r,3]]_q$\\
\hline Extended twisted & $[[q^2+1,q^2-3,3]]_q$& $[[q^2+1,q^2-3-r,r,3]]_q$\\
\hline Perfect & $[[n,n-s-2,3]]_q$& $[[n,n-s-2-r,r,3]]_q$
\\
&$[[n,n-s-2,3]]_q$& $[[n,n-s-2-r,r,3]]_q$\\ \hline\hline
\end{tabular}
\caption{Families of subsystem codes  from stabilizer
codes}\label{table:families}
\end{center}

\end{figure*}
\section{Propagation Rules}
 Let $C_1 \le \F_q^n$ and $C_2 \F_q^n$ be two classical codes
defined over $F_q$. The direct sum of $C_1$ and $C_2$ is a code $C
\le \F_q^{2n}$ defined as follows

\begin{eqnarray}
C=C_1 \oplus C_2=\{uv \mid u \in C_1, v \in C_2\}.
\end{eqnarray}
In a matrix form the code $C$ can be described as
$$C = \Big(\begin{array}{cc} C_1&0\\0&C_2 \end{array}\Big)$$

An $[n,k_1,d_1]_q$ classical code $C_1$ is a subcode in an
$[c,k_2,d_2]_q$ if every codeword $v$ in $C_1$ is also a codeword in
$C_2$, hence $k_1\leq k_2$. We say that an $[[n,k_1,r_1,d_1]]_q$
subsystem code $Q_1$ is a subcode in an $[[n,k_2,r_2,d_2]]_q$
subsystem code $Q_2$ if  every codeword $\ket{v}$ in $Q_1$ is also a
codeword in $Q_2$ and $k_1+r_1 \leq k_2+r_1$.

{\em Notation.} Let $q$ be a power of a prime integer $p$. We denote
by $\F_q$ the finite field with $q$ elements. We use the notation
$(x|y)=(x_1,\dots,x_n|y_1,\dots,y_n)$ to denote the concatenation of
two vectors $x$ and $y$ in $\F_q^n$. The symplectic weight of
$(x|y)\in \F_q^{2n}$ is defined as $$\swt(x|y)=\{(x_i,y_i)\neq
(0,0)\,|\, 1\le i\le n\}.$$ We define $\swt(X)=\min\{\swt(x)\,|\,
x\in X, x\neq 0\}$ for any nonempty subset $X\neq \{0\}$ of
$\F_q^{2n}$.

The trace-symplectic product of two vectors $u=(a|b)$ and
$v=(a'|b')$ in $\F_q^{2n}$ is defined as
$$\langle u|v \rangle_s = \tr_{q/p}(a'\cdot b-a\cdot b'),$$ where
$x\cdot y$ denotes the dot product and $\tr_{q/p}$ denotes the trace
from $\F_q$ to the subfield $\F_p$.  The trace-symplectic dual of a
code $C\subseteq \F_q^{2n}$ is defined as $$C^\sdual=\{ v\in
\F_q^{2n}\mid \langle v|w \rangle_s =0 \mbox{ for all } w\in C\}.$$
We define the Euclidean inner product $\langle x|y\rangle
=\sum_{i=1}^nx_iy_i$ and the Euclidean dual of $C\subseteq \F_{q}^n$
as $$C^\perp = \{x\in \F_{q}^n\mid \langle x|y \rangle=0 \mbox{ for
all } y\in C \}.$$ We also define the Hermitian inner product for
vectors $x,y$ in $\F_{q^2}^n$ as $\langle x|y\rangle_h
=\sum_{i=1}^nx_i^qy_i$ and the Hermitian dual of $C\subseteq
\F_{q^2}^n$ as
$$C^\hdual= \{x\in \F_{q^2}^n\mid \langle x|y \rangle_h=0 \mbox{ for all } y\in
C \}.$$

\begin{theorem}\label{th:oqecfq}
Let $C$ be a classical additive subcode of\/ $\F_q^{2n}$ such that
$C\neq \{0\}$ and let $D$ denote its subcode $D=C\cap C^\sdual$. If
$x=|C|$ and $y=|D|$, then there exists a subsystem code $Q= A\otimes
B$ such that
\begin{compactenum}[i)]
\item $\dim A = q^n/(xy)^{1/2}$,
\item $\dim B = (x/y)^{1/2}$.
\end{compactenum}
The minimum distance of subsystem $A$ is given by
\begin{compactenum}[(a)]
\item $d=\swt((C+C^\sdual)-C)=\swt(D^\sdual-C)$ if $D^\sdual\neq C$;
\item $d=\swt(D^\sdual)$ if $D^\sdual=C$.
\end{compactenum}
Thus, the subsystem $A$ can detect all errors in $E$ of weight less
than $d$, and can correct all errors in $E$ of weight $\le \lfloor
(d-1)/2\rfloor$.
\end{theorem}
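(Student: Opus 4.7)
The plan is to mimic the classical Calderbank--Shor--Steane / symplectic construction, but pushing it one step further to account for the gauge (co-)subsystem. The key input is the observation that $D=C\cap C^\sdual$ is, by construction, trace-symplectic self-orthogonal ($D\subseteq D^\sdual$), so $D$ lifts through the Weyl--Heisenberg representation to an abelian ``stabilizer'' subgroup $S$ of the error group on $\mathcal H=\C^{q^n}$. Diagonalizing $S$ and taking the joint $+1$-eigenspace yields a subspace $Q\subseteq \mathcal H$ of dimension $q^n/|D|^{1/2}=q^n/y^{1/2}$ by the usual counting (the projector $\frac1{|S|}\sum_{s\in S}s$ has trace $q^n/|D|^{1/2}$).

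Next I would extract the tensor factorization $Q=A\otimes B$. The larger additive code $C$ lifts to a (generally non-abelian) subgroup $G$ of the error group with $G\supseteq S$ and $|G|/|S|=x/y$. Because $\langle\,\cdot\,|\,\cdot\,\rangle_s$ descends to a \emph{non-degenerate} symplectic form on the quotient $C/D$ (the radical of $\langle\,\cdot\,|\,\cdot\,\rangle_s$ restricted to $C$ is exactly $C\cap C^\sdual=D$), the image of $G/S$ in the unitary group on $Q$ is a projective representation of a non-degenerate symplectic abelian group of order $x/y$. Such a representation is, up to isomorphism, a multiple of the unique Heisenberg irrep of dimension $(x/y)^{1/2}$, which forces a tensor splitting $Q\cong A\otimes B$ with $\dim B=(x/y)^{1/2}$; comparing dimensions gives $\dim A=q^n/(xy)^{1/2}$, as claimed in (i)--(ii). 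On this decomposition, by construction $G$ acts as $I_A\otimes U(B)$, so elements of $C$ act trivially on subsystem $A$.

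For the distance statement, I would use the standard subsystem-code dictionary: a Weyl--Heisenberg operator $E(e)$ with $e\in\F_q^{2n}$ (a) preserves $Q$ iff $E(e)$ commutes with $S$, i.e., $e\in D^\sdual$; and (b) acts as identity on subsystem $A$ iff its image lies in $G$, i.e., $e\in C$. Therefore the undetectable errors on $A$ are precisely those $e\in D^\sdual\setminus C$, yielding minimum distance $\swt(D^\sdual\setminus C)$. By symplectic duality, $D^\sdual=(C\cap C^\sdual)^\sdual=C+C^\sdual$, which rewrites this as $\swt((C+C^\sdual)-C)$, giving case (a). In the degenerate case $D^\sdual=C$, there are no ``purely logical'' errors outside $C$, so the code is pure and the relevant symplectic weight is simply that of the whole normalizer $D^\sdual$, giving case~(b). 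The error-detection/correction bound then follows from the usual Knill--Laflamme conditions for subsystem codes.

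The main obstacle is step~2: verifying that $G/S$ really acts as a full irreducible Heisenberg representation on $Q$, and hence that $Q$ factors cleanly as $A\otimes B$. Everything else (dimension count, duality manipulation $D^\sdual=C+C^\sdual$, translation of ``undetectable'' into symplectic-weight language) is bookkeeping once the non-degeneracy of the induced symplectic form on $C/D$ is in hand. I would therefore spend most of the write-up on that non-degeneracy argument and on citing the Stone--von Neumann-type uniqueness that pins down the irrep dimension as $(x/y)^{1/2}$.
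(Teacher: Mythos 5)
The paper itself states Theorem~\ref{th:oqecfq} without proof (it is imported from the companion papers), so there is no in-paper argument to compare against; your route --- lift $D$ to an abelian stabilizer, lift $C$ to a gauge group, observe that the radical of the trace-symplectic form restricted to $C$ is exactly $D$ so that $C/D$ carries a non-degenerate form, and invoke Stone--von Neumann/Clifford theory to split the joint eigenspace as $A\otimes B$ with $\dim B=\sqrt{x/y}$ --- is precisely the standard construction used in those references, and the distance discussion (undetectable errors are $D^\sdual\setminus C$, with $D^\sdual=(C\cap C^\sdual)^\sdual=C+C^\sdual$) is correct.

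There is, however, a concrete numerical error that makes your bookkeeping internally inconsistent. The joint $+1$-eigenspace of the lifted stabilizer $S$ has dimension $q^n/|D|=q^n/y$, not $q^n/|D|^{1/2}$: since $S$ can be chosen to meet the phase subgroup trivially (this splitting of the central extension is exactly where self-orthogonality with respect to the \emph{trace}-symplectic form is needed for non-prime $q$), the projector $\frac{1}{|S|}\sum_{s\in S}s$ has trace $q^n/|S|=q^n/y$. With your stated figure $q^n/y^{1/2}$, dividing by $\dim B=(x/y)^{1/2}$ would give $\dim A=q^n/x^{1/2}$, not the claimed $q^n/(xy)^{1/2}$, so as written the conclusion does not follow from your own numbers. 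With the corrected count, $(q^n/y)\big/(x/y)^{1/2}=q^n/(xy)^{1/2}$ and everything else in your sketch goes through; the remaining work is, as you say, the non-degeneracy of the induced form on $C/D$ and the uniqueness of the Heisenberg irrep of dimension $(x/y)^{1/2}$, together with the (routine but necessary) check that the gauge group preserves each eigenspace of $S$ because $C\perp_s D$.
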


\subsection{\textbf{Extending Subsystem Codes}} We derive new subsystem codes
from known ones by extending and shortening the length of the code.

\begin{theorem}\label{lemma_n+1k}
If there exists an  $((n,K,R,d))_q$ Clifford subsystem code with
$K>1$, then there exists an $((n+1, K, R, \ge d))_q$ subsystem code
that is pure to~1.
\end{theorem}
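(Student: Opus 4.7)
The plan is to translate the given $((n,K,R,d))_q$ Clifford subsystem code into an additive subcode $C\subseteq\F_q^{2n}$ via Theorem~\ref{th:oqecfq} (with $|C|=x$, $D=C\cap C^\sdual$, $|D|=y$, giving $K=q^n/(xy)^{1/2}$ and $R=(x/y)^{1/2}$), construct a one-coordinate extension $C'\subseteq \F_q^{2(n+1)}$, and read back the parameters from Theorem~\ref{th:oqecfq}. The hypothesis $K>1$ is equivalent to $D^\sdual\neq C$ (since $|D^\sdual|=q^{2n}/|D|$), so the distance of the original code is $d=\swt(D^\sdual\setminus C)$.

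The extension I would use appends a free trace-symplectic coordinate on the $b$-side paired with a fixed zero on the $a$-side:
$$
C'=\{(a_1,\dots,a_n,0\mid b_1,\dots,b_n,\beta):(a|b)\in C,\ \beta\in\F_q\},\qquad |C'|=qx.
$$
A short trace-symplectic calculation shows that the free $\beta$ forces the dual's new $a$-coordinate $\alpha'$ to vanish while leaving its new $b$-coordinate $\beta'$ free, so $C'^\sdual=\{(a',0\mid b',\beta'):(a'|b')\in C^\sdual,\ \beta'\in \F_q\}$ and hence $D'=C'\cap C'^\sdual=\{(a,0\mid b,\beta):(a|b)\in D,\ \beta\in\F_q\}$ with $|D'|=qy$. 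Because $|C|$ and $|D|$ grow by the same factor $q$, Theorem~\ref{th:oqecfq} returns $K'=q^{n+1}/(qx\cdot qy)^{1/2}=K$ and $R'=(qx/qy)^{1/2}=R$, so both dimension parameters survive the extension.

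For the distance, the same duality argument gives $D'^\sdual=\{(a',0\mid b',\beta'):(a'|b')\in D^\sdual,\ \beta'\in\F_q\}$, and a vector $(a',0|b',\beta')$ lies in $C'$ exactly when $(a'|b')\in C$, so $D'^\sdual\setminus C'$ is parametrized by $D^\sdual\setminus C$ together with a free $\beta'\in\F_q$; each such vector has $\swt(a',0|b',\beta')\ge \swt(a'|b')\ge d$, proving the distance claim. Purity to~$1$ is automatic but no better, since $(0,0\mid 0,\beta')\in D'^\sdual$ has weight exactly~$1$ for $\beta'\neq 0$. The only subtle step is choosing the extension so that $|C|$ and $|D|$ scale by the same factor, and this is exactly where trace-symplectic duality does the work: the free coordinate I add on one half is paired by duality with a forced-free coordinate on the other half, so the two cardinalities grow in lockstep. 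That same mechanism is what inserts the weight-$1$ elements into $D'^\sdual$ and explains why the strongest purity conclusion available is purity to~$1$.
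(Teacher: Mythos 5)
Your proof is correct and follows essentially the same route as the paper: extend the additive code $C$ by one coordinate that is free on one half and zero on the other, observe that trace-symplectic duality commutes with this extension (so $|C|$ and $|D|$ both scale by $q$), and read off $K'=K$, $R'=R$, $d'\ge d$ from Theorem~\ref{th:oqecfq}, with the weight-one vector explaining purity to~1. The only difference is cosmetic — you place the free entry in the $b$-half where the paper places it in the $a$-half, and you verify the dual explicitly where the paper uses inclusion plus a cardinality count.
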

\begin{proof}
We first note that for any additive subcode $X\le \F_q^{2n}$, we can
define an additive code $X'\le \F_q^{2n+2}$ by
$$X'=\{ (a\alpha|b0)\,|\, (a|b)\in X, \alpha\in
\F_q\}.$$ We have $|X'|=q|X|$. Furthermore, if $(c|e)\in X^\sdual$,
then $(c\alpha|e0)$ is contained in $(X')^\sdual$ for all $\alpha$
in $\F_q$, whence $(X^\sdual)'\subseteq (X')^\sdual$.  By comparing
cardinalities we find that equality must hold; in other words, we
have
$$(X^\sdual)'= (X')^\sdual.$$

By Theorem~\ref{th:oqecfq}, there are two additive codes $C$ and $D$
associated with an $((n,K,R,d))_q$ Clifford subsystem code such that
$$|C|=q^nR/K$$ and $$|D|=|C\cap C^\sdual| = q^n/(KR).$$ We can derive from the
code $C$ two new additive codes of length $2n+2$ over $\F_q$, namely
$C'$ and $D'=C'\cap (C')^\sdual$. The codes $C'$ and $D'$ determine
a $((n+1,K',R',d'))_q$ Clifford subsystem code. Since
\begin{eqnarray*}
D'&=&C'\cap (C')^\sdual = C'\cap (C^\sdual)' \\&=&(C\cap C^\sdual)',
\end{eqnarray*}
 we have
$|D'|=q|D|$. Furthermore, we have $|C'|=q|C|$. It follows from
Theorem~\ref{th:oqecfq} that
\begin{compactenum}[(i)]
\item $K'= q^{n+1}/\sqrt{|C'||D'|}=q^n/\sqrt{|C||D|}=K$,
\item $R'=(|C'|/|D'|)^{1/2} = (|C|/|D|)^{1/2} = R$,
\item $d'= \swt( (D')^\sdual \setminus C')\ge \swt( (D^\sdual\setminus C)')=d$.
\end{compactenum}
Since $C'$ contains a vector $(\mathbf{0}\alpha|\mathbf{0}0)$ of
weight $1$, the resulting subsystem code is pure to~1.
\end{proof}


\begin{corollary}
If there exists an $[[n,k,r,d]]_q$ subsystem  code with $k>0$ and
$0\leq r <k$, then there exists an $[[n+1, k, r, \ge d]]_q$
subsystem code that is pure to~1.
\end{corollary}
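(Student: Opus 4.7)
The plan is to obtain this corollary as a direct translation of Theorem~\ref{lemma_n+1k} into the $[[n,k,r,d]]_q$ notation. Recall that the bracketed notation $[[n,k,r,d]]_q$ is shorthand for the double-parenthesis notation $((n,K,R,d))_q$ with $K=q^k$ and $R=q^r$; in the construction of Theorem~\ref{th:oqecfq} this corresponds to taking $\log_q$ of the subsystem and co-subsystem dimensions.

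First I would verify that the hypothesis $k>0$ of the corollary matches the hypothesis $K>1$ of Theorem~\ref{lemma_n+1k}: since $q\ge 2$, the assumption $k>0$ gives $K=q^k\ge q>1$, which is exactly what the theorem needs. The side condition $0\le r<k$ is inherited from the existence hypothesis on the original code and plays no further role in the extension argument itself.

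Then I would apply Theorem~\ref{lemma_n+1k} verbatim: starting from the given $((n,q^k,q^r,d))_q$ Clifford subsystem code, the construction in that theorem produces an $((n+1,K',R',d'))_q$ subsystem code with $K'=q^k$, $R'=q^r$, and $d'\ge d$, and the resulting code is pure to~1 because the augmented additive code $C'$ contains the weight-one vector $(\mathbf{0}\alpha\mid\mathbf{0}0)$. Re-expressing $K'=q^k$ and $R'=q^r$ logarithmically yields precisely an $[[n+1,k,r,\ge d]]_q$ subsystem code pure to~1.

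There is essentially no technical obstacle here; the only point requiring any care is the bookkeeping that the parameters $k,r$ in the bracket notation are exponents while $K,R$ in Theorem~\ref{lemma_n+1k} are dimensions, so one must be explicit that $q^k/1 = q^k$ and $q^r/1 = q^r$ rather than trying to extract an ``additive'' relationship between the parameters. Once this translation is in place, the corollary is immediate.
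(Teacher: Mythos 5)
Your proposal is correct and matches the paper's intent exactly: the corollary is stated without a separate proof precisely because it is the immediate specialization of Theorem~\ref{lemma_n+1k} to $K=q^k$, $R=q^r$, with purity to~1 inherited from the weight-one vector in $C'$. Your check that $k>0$ gives $K>1$ and your remark on the exponent-versus-dimension bookkeeping are the only points of substance, and you handle both correctly.
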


\subsection{ \textbf{Shortening Subsystem Codes}} We can also shorten
the length of a subsystem code and still trade the dimensions of the
new subsystem code and its co-subsystem code as shown in the
following Lemma.

\begin{theorem}\label{lem:n-1k+1rule}
If an $((n,K,R,d))_q$ pure subsystem code $Q$ exists, then there is
a pure subsystem code $Q_p$ with parameters $((n-1,qK,R,\geq
d-1))_q$.
\end{theorem}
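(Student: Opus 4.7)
The plan is to mirror the additive-code construction of Theorem~\ref{lemma_n+1k}, but in reverse: rather than extending by a symplectic coordinate, we shorten one. Fix the last coordinate, let $W=\{(a|b)\in\F_q^{2n}:a_n=b_n=0\}$ and $W^\sdual=\{(\mathbf{0}\,\alpha|\mathbf{0}\,\beta):\alpha,\beta\in\F_q\}$, and define
$$C''=\{(a_1,\ldots,a_{n-1}|b_1,\ldots,b_{n-1}):(a_1,\ldots,a_{n-1},0|b_1,\ldots,b_{n-1},0)\in C\},$$
which is naturally identified with $C\cap W\le W\cong\F_q^{2(n-1)}$. Applying Theorem~\ref{th:oqecfq} to $C''$ will produce the desired $((n-1,qK,R,\ge d-1))_q$ code once $|C''|$, $|D''|=|C''\cap(C'')^\sdual|$, and the minimum weight of $(D'')^\sdual$ are controlled.

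First I would cash in the purity hypothesis. Purity gives $\swt(D^\sdual)=\swt(C+C^\sdual)\ge d\ge 2$, so $D^\sdual$ contains no symplectic weight-$1$ vector; since $C,C^\sdual\subseteq D^\sdual$, we conclude $C\cap W^\sdual=C^\sdual\cap W^\sdual=\{0\}$. The standard duality between shortening and puncturing for the trace-symplectic form identifies $(C'')^\sdual$ with the coordinate projection of $C^\sdual$ onto $\F_q^{2(n-1)}$, and $C^\sdual\cap W^\sdual=\{0\}$ makes this projection a bijection. Symmetrically, the projection $C\to\F_q^2$ onto the last symplectic coordinate is surjective, giving $|C''|=|C|/q^2$. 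The key identity $C\cap(C^\sdual+W^\sdual)=D$ follows from purity: if $v=c^\sdual+w'\in C$ with $c^\sdual\in C^\sdual$ and $w'\in W^\sdual$, then $w'=v-c^\sdual\in (C+C^\sdual)\cap W^\sdual=\{0\}$, forcing $v\in C\cap C^\sdual=D$. Translating back to $\F_q^{2(n-1)}$ this gives $D''=D\cap W$ and hence $|D''|=|D|/q^2$. Plugging into Theorem~\ref{th:oqecfq} delivers $K'=q^{n-1}/\sqrt{|C''||D''|}=qK$ and $R'=\sqrt{|C''|/|D''|}=R$.

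The hard part will be the distance bound. For any nonzero $v\in(D'')^\sdual=C''+(C'')^\sdual$, I would lift it to $\tilde v\in\F_q^{2n}$ by choosing $u\in C\cap W$ and $c^\sdual\in C^\sdual$ whose images in $\F_q^{2(n-1)}$ sum to $v$, and setting $\tilde v=u+c^\sdual\in C+C^\sdual=D^\sdual$. By construction $\tilde v$ differs from $v$ only in the $n$-th symplectic position, so $\swt(\tilde v)\le\swt(v)+1$. A short check using the identity of the previous paragraph shows $\tilde v=0$ would force $v=0$, so for nonzero $v$ purity yields $\swt(\tilde v)\ge d$ and hence $\swt(v)\ge d-1$. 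This establishes the claimed distance $\ge d-1$, and because the same bound applies to \emph{every} nonzero $v\in(D'')^\sdual$, the new subsystem code $Q_p$ is pure to~$d-1$.

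The main obstacle is that without purity the puncturing of $C^\sdual$ appearing in $(C'')^\sdual$ is generically strictly larger than the shortening of $C^\sdual$, so one has only the inclusion $D\cap W\subseteq D''$, and extra elements would bloat $|D''|$ and shrink $K'$ below $qK$. Purity is precisely the hypothesis that collapses this slack via the identity $C\cap(C^\sdual+W^\sdual)=D$, and it is the same hypothesis that converts the crude bound $\swt(\tilde v)\le\swt(v)+1$ into the sharp weight estimate for $v$.
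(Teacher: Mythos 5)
Your construction is correct, but it follows a genuinely different route from the paper's proof of this theorem. The paper argues indirectly: it converts the pure $((n,K,R,d))_q$ subsystem code into a pure $((n,KR,\ge d))_q$ stabilizer code via the dimension-trading results of~\cite{aly08a}, shortens that stabilizer code to $((n-1,qKR,\ge d-1))_q$ using~\cite[Theorem~70]{ketkar06}, and then trades dimensions back to recover the gauge part, obtaining $((n-1,qK,R,\ge d-1))_q$. You instead work directly at the level of the additive codes of Theorem~\ref{th:oqecfq}, shortening $C$ at one symplectic coordinate (taking $C\cap W$) and using purity to identify $(C'')^\sdual$ with the punctured $C^\sdual$ and to prove $C\cap(C^\sdual+W^\sdual)=D$, which yields $|C''|=|C|/q^2$, $|D''|=|D|/q^2$, hence $K'=qK$, $R'=R$, together with the lifting argument for $d'\ge d-1$ and purity to $d-1$. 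This is closer in spirit to the paper's alternative direct proof (Theorem~\ref{lemma_n-1k+1}), but that proof punctures rather than shortens, lands on $((n-1,K,qR,\ge d-1))_q$, and still needs a final trading step, whereas your shortening produces the stated parameters in one pass and makes the purity of $Q_p$ explicit rather than inherited from cited results. Two small points you should make explicit: the statement implicitly needs $d\ge 2$ (and $n\ge 2$), exactly as the paper's proof assumes; and the step $|D''|=|D|/q^2$ requires that the projection of $D$ onto the last symplectic coordinate pair be all of $\F_q^2$, which follows by the same duality argument you used for $C$, since purity gives $D^\sdual\cap W^\sdual=\{0\}$ and the trace-symplectic form is non-degenerate on the last coordinate block; as written you only asserted the analogous facts for $C$ and $C^\sdual$. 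With those two remarks added, your argument is a complete and self-contained proof.
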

\begin{proof}
We know that existence of the pure subsystem code $Q$ with
parameters $((n,K,R,d))_q$ implies existence of a pure stabilizer
code with parameters $((n,KR,\geq d))_q$ for $n \geq 2$ and $d\geq
2$ from~\cite[Theorem 2.]{aly08a}. By~\cite[Theorem 70]{ketkar06},
there exist a pure stabilizer code with parameters $((n-1,qKR,\geq
d-1))_q$. This stabilizer code can be seen as $((n-1,qKR,0,\geq
d-1))_q$ subsystem code. By using \cite[Theorem 2.]{aly08a}, there
exists a pure $\F_q$-linear subsystem code with parameters
$((n-1,qK,R,\geq d-1))_q$ that proves the claim.
\end{proof}
Analog of the previous Theorem is the following Lemma.

\begin{lemma}\label{lem:n-1k+1rule}
If an $\F_q$-linear $[[n,k,r,d]]_q$ pure subsystem code $Q$ exists,
then there is a pure subsystem code $Q_p$ with parameters
$[[n-1,k+1,r,\geq d-1]]_q$.
\end{lemma}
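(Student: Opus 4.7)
The plan is to mimic the proof of the preceding Theorem \ref{lem:n-1k+1rule}, but working throughout in the $\F_q$-linear category so that the resulting code is itself $\F_q$-linear. The three ingredients I will use are: (i) the trading-dimensions result Theorem \ref{cor:generic} in its $\F_q$-linear form, which freely converts between pure $\F_q$-linear subsystem codes and pure $\F_q$-linear stabilizer codes; (ii) the standard length-shortening rule for pure $\F_q$-linear stabilizer codes that reduces the length by one and raises the dimension by one at the cost of at most one unit of minimum distance (the $\F_q$-linear analog of \cite[Theorem~70]{ketkar06}); and (iii) Theorem \ref{th:FqshrinkR} (or equivalently Theorem \ref{cor:generic} again) to push $\F_q$-dimensions back from the subsystem into the co-subsystem while preserving purity.

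First, I would start from the hypothesized pure $\F_q$-linear $[[n,k,r,d]]_q$ subsystem code $Q$. Applying the second half of Theorem \ref{cor:generic} (the direction that assembles a pure stabilizer code from a pure subsystem code) yields a pure $\F_q$-linear stabilizer code with parameters $[[n,k+r,\geq d]]_q$. At this point the co-subsystem has been completely absorbed, but purity and $\F_q$-linearity have been retained because Theorem \ref{cor:generic} is stated in the $\F_q$-linear category.

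Next, I would apply the $\F_q$-linear shortening lemma for pure stabilizer codes to this $[[n,k+r,\geq d]]_q$ code, obtaining a pure $\F_q$-linear stabilizer code $[[n-1,k+r+1,\geq d-1]]_q$. Regarding this as a pure $\F_q$-linear $[[n-1,k+r+1,0,\geq d-1]]_q$ subsystem code (with trivial co-subsystem), I then iterate Theorem \ref{th:FqshrinkR} exactly $r$ times, each application transferring one $\F_q$-dimension from the information subsystem $A$ into the co-subsystem $B$ while preserving purity and the minimum distance $\geq d-1$. After $r$ such steps the parameters become $[[n-1,k+1,r,\geq d-1]]_q$, which is the claim.

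The main obstacle is really a bookkeeping one: I need to check that each of the three moves preserves both the $\F_q$-linearity and the purity assumption. Theorems \ref{th:FqshrinkR} and \ref{cor:generic} are already stated in terms of pure $\F_q$-linear codes, so those steps are automatic. The delicate step is the shortening of the intermediate stabilizer code, where one punctures a coordinate on which the stabilizer contains an $\F_q$-line of weight-one generators; purity of the original stabilizer code guarantees (after at most a symplectic change of basis on the punctured coordinate) that such generators can be extracted, and the resulting shortened code remains $\F_q$-linear and pure to~$1$, so that the last application of Theorem \ref{th:FqshrinkR} is legitimate. Once this is in place, the distance bound $\geq d-1$ follows, as usual, from the fact that puncturing a coordinate decreases the symplectic weight of any surviving coset representative by at most one.
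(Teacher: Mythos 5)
Your proof follows essentially the same route as the paper's: convert the pure subsystem code into a pure $[[n,k+r,\geq d]]_q$ stabilizer code, shorten it to a pure $[[n-1,k+r+1,\geq d-1]]_q$ stabilizer code, regard that as a subsystem code with trivial co-subsystem, and trade $r$ dimensions back into the co-subsystem to obtain $[[n-1,k+1,r,\geq d-1]]_q$. The only slip is that Theorem~\ref{th:FqshrinkR} as stated moves dimensions from the co-subsystem \emph{into} the subsystem (it increases $k$ and decreases $r$), so the final step must instead use its converse direction, namely the first part of Theorem~\ref{cor:generic}, which you already cite as the alternative and which accomplishes the transfer in one application while preserving purity and $\F_q$-linearity.
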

\begin{proof}
We know that existence of the pure subsystem code $Q$ implies
existence of a pure stabilizer code with parameters $[[n,k+r,\geq
d]]_q$ for $n \geq 2$ and $d\geq 2$ by using~\cite[Theorem 2. and
Theorem 5.]{aly08a}. By~\cite[Theorem 70]{ketkar06}, there exist a
pure stabilizer code with parameters $[[n-1,k+r+1,\geq d-1]]_q$.
This stabilizer code can be seen as an $[[n-1,k+r+1,0,\geq d-1]]_q$
subsystem code. By using \cite[Theorem 3.]{aly08a}, there exists a
pure $\F_q$-linear subsystem code with parameters $[[n-1,k+1,r,\geq
d-1]]_q$ that proves the claim.
\end{proof}

We can also prove the previous Theorem by defining a new code $C_p$
from the code $C$ as follows.
%
 \begin{theorem}\label{lemma_n-1k+1}
 If there exists a
pure subsystem code $ Q=A\otimes B$ with parameters $((n,K,R,d))_q$
with $n\geq 2$ and $d \geq 2$, then there is a subsystem code $Q_p$
with parameters $((n-1,K,q R, \geq d-1))_q$.
\end{theorem}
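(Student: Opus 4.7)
The plan is to obtain $C_p$ directly from $C$ by deleting one symplectic coordinate and then applying Theorem~\ref{th:oqecfq} to $C_p$. By Theorem~\ref{th:oqecfq} the pure $((n,K,R,d))_q$ subsystem code arises from an additive subcode $C\le\F_q^{2n}$ with $D=C\cap C^\sdual$ of sizes $|C|=q^nR/K$ and $|D|=q^n/(KR)$. To produce an $((n-1,K,qR,\ge d-1))_q$ code via Theorem~\ref{th:oqecfq}, I need an additive code $C_p\le\F_q^{2(n-1)}$ with $|C_p|=q^nR/K$ and $|D_p|=|C_p\cap C_p^\sdual|=q^{n-2}/(KR)=|D|/q^2$; the task then reduces to exhibiting such a $C_p$ and verifying its distance.

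Fix a coordinate $i$ and let $\pi$ denote the projection that drops position $i$ and its symplectic partner $n+i$, from $\F_q^{2n}$ onto $\F_q^{2(n-1)}$. Set $C_p := \pi(C)$. First I would verify $|C_p|=|C|$: purity together with $d\ge 2$ gives $\swt(C)\ge d\ge 2$, so no codeword of $C$ is supported solely on the deleted symplectic coordinate, and hence $\pi|_C$ is injective. This is the analogue of the length-increasing step used in the proof of Theorem~\ref{lemma_n+1k}, run in reverse. The distance claim $\swt(D_p^\sdual\setminus C_p)\ge d-1$ will then follow from the standard observation that puncturing one symplectic coordinate can reduce the symplectic weight of any vector by at most one, so every low-weight witness in $D_p^\sdual\setminus C_p$ lifts to a witness of weight $\ge d$ in $D^\sdual\setminus C$.

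The main obstacle will be the size computation $|D_p|=|D|/q^2$. Here I would use the duality identity $C_p^\sdual=\sigma(C^\sdual)$, where $\sigma$ denotes shortening at positions $i,n+i$, and realise $D_p$ as the image under $\pi$ of those $c\in C$ whose zero-extension at position $i$ still lies in $C^\sdual$. For the count to come out to $|D|/q^2$, one picks $i$ so that the evaluation map $C^\sdual\to\F_q^2$, $c\mapsto(c_i,c_{n+i})$, is surjective; such an $i$ must exist unless $C^\sdual$ is degenerate at every coordinate, which is precluded once $Q$ is non-trivial and pure with $d\ge 2$. A dimension-counting argument, again using purity to control the extra contributions arising from codewords of $C\setminus D$, then yields $|D_p|=|D|/q^2$. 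Once $|C_p|$ and $|D_p|$ are in hand, Theorem~\ref{th:oqecfq} applied to $C_p$ gives $\dim A_p=q^{n-1}/\sqrt{|C_p||D_p|}=K$ and $\dim B_p=\sqrt{|C_p|/|D_p|}=qR$, completing the construction of the desired $((n-1,K,qR,\ge d-1))_q$ subsystem code.
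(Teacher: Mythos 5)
Your construction is the same as the paper's: puncture $C$ (hence $D^\sdual$) at one symplectic coordinate pair, argue via purity that nothing is lost, and feed the punctured codes into Theorem~\ref{th:oqecfq}. The injectivity of $\pi$ on $C$ and the distance bound are fine (for the latter you implicitly need that every element of $D_p^\sdual$ lifts to $D^\sdual$, which does follow from $D_p^\sdual=C_p+C_p^\sdual\subseteq\pi(C+C^\sdual)=\pi(D^\sdual)$ using your identity $C_p^\sdual=\sigma(C^\sdual)$). The genuine gap sits exactly where you flag the ``main obstacle'': the count $|D_p|=|D|/q^2$ is never established, and the ingredient you propose for it is the wrong one. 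Surjectivity of the evaluation map $\mathrm{ev}_i\colon x\mapsto(x_i,x_{n+i})$ on $C^\sdual$ does not control $D_p=C_p\cap C_p^\sdual$; a priori one only has $|D_p|\ge |D|/q^2$, and if that inequality were strict the construction would produce strictly smaller $K_p$ and $R_p$ than claimed.

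What is actually needed, and what purity delivers, is this. If $c\in C$ and its zeroed-out version $\hat c=c-\iota(\mathrm{ev}_i(c))$ lies in $C^\sdual$, then $c-\hat c=\iota(\mathrm{ev}_i(c))$ is a vector of symplectic weight at most $1$ lying in $C+C^\sdual=D^\sdual$; purity with $d\ge 2$ forces it to be zero, so $\mathrm{ev}_i(c)=0$ and $\hat c=c\in C\cap C^\sdual=D$. Hence $D_p=\pi\bigl(D\cap\ker\mathrm{ev}_i\bigr)$, which is precisely the statement that ``extra contributions from $C\setminus D$'' cannot occur -- but this is a weight-one argument, not a dimension count. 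Second, $\mathrm{ev}_i(D)$ must be all of $\F_q^2$: any nonzero element of its annihilator under the pairing restricted to the pair would give a weight-one vector in $D^\sdual$, again contradicting purity. Together these yield $|D_p|=|D|/q^2$ at \emph{every} coordinate, so no choice of $i$ is required; the operative surjectivity is that of $\mathrm{ev}_i$ on $D$, not on $C^\sdual$. With these two observations inserted, your argument closes and coincides in substance with the paper's proof (which is itself terse at exactly this point, asserting rather than proving that the punctured pair satisfies $C_p\cap C_p^\sdual=D_p$).
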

\begin{proof}
By Theorem~\ref{th:oqecfq}, if an $((n,K,R,d))_q$ subsystem code $Q$
exists for $K>1$ and $1\leq R<K$, then there exists an additive code
$C \in \F_q^{2n}$ and its subcode $D\leq \F_q^{2n}$ such that
$|C|=q^n R/K$ and $|D|=|C\cap C^{\perp_s}|=q^{n}/KR$. Furthermore,
$d=\min \swt(D^{\perp_s}\backslash C)$. Let $w=(w_1,w_2,\ldots,w_n)$
and $u=(u_1,u_2,\ldots,u_n)$ be two vectors in $\F_q^n$. W.l.g., we
can assume that the code $D^\sdual$ is defined as
$$D^\sdual=\{(u|w) \in \F_q^{2n} \mid w,u \in \F_q^n\}.$$
Let $w_{-1}=(w_1,w_2,\ldots,w_{n-1})$ and
$u_{-1}=(u_1,u_2,\ldots,u_{n-1})$ be two vectors in $\F_q^{n-1}$.
Also, let $D_p^\sdual$ be the code obtained by puncturing the first
coordinate of $D^\sdual$, hence
$$D_p^\sdual=\{(u_{-1}|w_{-1}) \in \F_q^{2n-2} \mid w_{-1},u_{-1} \in \F_q^{n-1}\}.$$
since the minimum distance of $D^\sdual$ is at least 2, it follows
that $|D_p^\sdual|=|D^\sdual|=K^2|C|=K^2q^nR/K=q^nRK$ and the
minimum distance of $D_p^\sdual$ is at least $d-1$. Now, let us
construct the dual code of $D_p^\sdual$ as follows.
\begin{eqnarray*}
(D_p^\sdual)^\sdual &=&\{(u_{-1}|w_{-1}) \in \F_q^{2n-2} \mid \\&& (0u_{-1}|0w_{-1}) \in D,w_{-1},u_{-1} \in
\F_q^{n-1}\}.\end{eqnarray*}

Furthermore, if $(u_{-1}|w_{-1}) \in D_p$, then $(0u_{-1}|0w_{-1})
\in D$. Therefore, $D_p$ is a self-orthogonal code and it has size
given by $$|D_p|=q^{2n-2}/|D_p^\sdual|=q^{n-2}/RK.$$
We can also puncture the code $C$ to the code $C_p$ at the first
coordinate, hence \begin{eqnarray*} C_p&=&\{(u_{-1}|w_{-1}) \in
\F_q^{2n-2} \mid w_{-1},u_{-1} \in \F_q^{n-1},\\ &&
(aw_{-1}|bu_{-1}) \in C, a,b \in F_q\}.\end{eqnarray*} Clearly, $D
\subseteq C$  and if $a=b=0$, then the vector $(0u_{-1}|0w_{-1}) \in
D$, therefore, $(u_{-1},w_{-1}) \in D_p$. This gives us that $D_p
\subseteq C_p$. Furthermore, hence $|C|=|C_p|$. The dual code
$C_p^\sdual$  can be defined as \begin{eqnarray*} C_p^\sdual
&=&\{(u_{-1}|w_{-1}) \in \F_q^{2n-2} \mid w_{-1},u_{-1} \in
\F_q^{n-1},\\ && (ew_{-1}|fu_{-1}) \in C^\sdual, e,f \in
F_q\}.\end{eqnarray*} Also, if $e=f=0$, then $D_p \subseteq
C_p^\sdual$, furthermore, \begin{eqnarray}D_p^\sdual&=&C_p \cup
C_p^\sdual=\{ (u_{-1}|w_{-1}) \in \F_q^{2n-2} \mid \\&&
(0u_{-1}|0w_{-1}) \in D\}\end{eqnarray}

Therefore there exists a subsystem code $Q_p=A_p \otimes B_p$.
Also, the code $D_p^\sdual$ is pure and has minimum distance at
least $d-1$. We can proceed and compute the dimension of subsystem
$A_p$ and co-subsystem $B_p$ from Theorem~\ref{th:oqecfq} as
follows.

\begin{compactenum}[(i)]
\item $K_p= q^{n-1}/\sqrt{|C_p||D_p|}=q^{n-1}/\sqrt{(q^nR/K)(q^{n-2}/RK)}=K$,
\item $R_p=(|C_p|/|D_p'|)^{1/2} = ((q^nR/K)/(q^{n-2}/RK))^{1/2} = qR$,
\item $d_p= \swt( (D_p)^\sdual \setminus C_p)= \swt( (D^\sdual\setminus C_p))\geq d-1$.
\end{compactenum}

  Therefore, there exists a subsystem cod with parameters
$((n-1,K,q R, \geq d-1))_q$.

The minimum distance condition follows since the code $Q$ has $d=
\min \swt (D^{\perp_s}\backslash C)$ and the code $Q_p$ has minimum
distance as $Q$ reduced by one. So, the minimum weight of
$D_p^{\sdual} \backslash C_p$ is at least the minimum weight of
$(D^\sdual \backslash C)-1$
\begin{eqnarray*}
d_p&=&\min \swt (D{_p}^{\perp_s} \backslash C_p)  \nonumber \\
&\geq& \min \swt (D^{\perp_s} \backslash C)-1=d-1
\end{eqnarray*}
If the code $Q$ is pure, then $\min \swt (D^{\perp_s})=d$,
therefore, the new code $Q_p$ is pure since $d_p=\min \swt
(D_p^{\perp_s}) \geq d$.

We conclude that if there is a subsystem code with parameters
$((n-1,K,q R, \geq d-1))_q$, using ~\cite[Theorem 2.]{aly08a}, there
exists a code with parameters $((n-1,qK, R, \geq d-1))_q$.
\end{proof}

\subsection{ \textbf{Reducing Dimension}} We also can reduce dimension
of the subsystem code for fixed length $n$ and minimum distance $d$,
and still obtain a new subsystem code with improved minimum distance
as shown in the following results.

\begin{theorem}\label{lem:reducingK}
If a (pure)$\F_q$-linear $[[n,k,r,d]]_q$ subsystem code $Q$ exists
for $d\geq 2$, then there exists an $\F_q$-linear
$[[n,k-1,r,d_e]]_q$ subsystem code $Q_e$ (pure to d) such that $d_e
\geq d$.
\end{theorem}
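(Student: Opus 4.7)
The plan is to reduce the claim to a known dimension-shrinking propagation rule for stabilizer codes, routing through Theorem~\ref{cor:generic}, which lets us pass back and forth between subsystem codes and stabilizer codes of the same length. Concretely, I will start from the given pure $\F_q$-linear $[[n,k,r,d]]_q$ subsystem code $Q$, convert it into a stabilizer code, shrink the stabilizer dimension by one, and then convert back into a subsystem code with the desired parameters.

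First, I would apply the second half of Theorem~\ref{cor:generic} to $Q$ to obtain a pure $\F_q$-linear $[[n,k+r,d]]_q$ stabilizer code $S$. Second, I would invoke the classical stabilizer propagation rule of \cite[Theorem~70]{ketkar06} (the same reference already used in the proofs of Theorem~\ref{lem:n-1k+1rule} and Lemma~\ref{lem:n-1k+1rule}): from a pure $\F_q$-linear $[[n,k+r,d]]_q$ stabilizer code one obtains a pure $\F_q$-linear $[[n,k+r-1,d_e]]_q$ stabilizer code $S_e$ with $d_e \ge d$. Third, I would apply the first half of Theorem~\ref{cor:generic} to $S_e$ with the choice of co-subsystem dimension $r$; since $r<k+r-1$ whenever $k\geq 2$, this produces a pure $\F_q$-linear $[[n,(k+r-1)-r,r,\ge d_e]]_q = [[n,k-1,r,\ge d]]_q$ subsystem code $Q_e$, which is precisely the code claimed. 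Purity of $Q_e$ to $d$ is inherited from purity of $S_e$ because Theorem~\ref{cor:generic} guarantees purity to $\min\{d_e,d_e\}=d_e\ge d$.

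The main obstacle is making sure that each conversion preserves the $\F_q$-linear structure and the purity hypothesis, so that the hypotheses of Theorem~\ref{cor:generic} and of \cite[Theorem~70]{ketkar06} are satisfied at each step, and also ensuring that the degenerate case $k=1$ (which would only give $k-1=0$ and force $Q_e$ to be the trivial stabilizer code) is handled separately. Once these bookkeeping points are checked, no direct manipulation of the underlying classical codes $C$ and $D=C\cap C^{\sdual}$ is needed; the entire argument is reductive and follows the same pattern used in Theorem~\ref{lem:n-1k+1rule} and Lemma~\ref{lem:n-1k+1rule}, namely lift to a stabilizer code, apply the analogous classical propagation rule, and push back through Theorem~\ref{cor:generic}.
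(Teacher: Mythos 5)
Your argument is sound in outline, but it takes a genuinely different route from the paper's. The paper proves Theorem~\ref{lem:reducingK} directly at the level of the underlying classical codes: it enlarges $D=C\cap C^{\sdual}$ by a vector taken from $D^{\sdual}\setminus(C\cup C^{\sdual})$, which enlarges $C$ to $C_e$ with $|C_e|=q|C|$ and $|D_e|=q|D|$, and then reads off $K_e=q^{k-1}$, $R_e=q^{r}$ and $d_e\ge d$ from Theorem~\ref{th:oqecfq}, the distance bound coming from $D_e^{\sdual}\setminus C_e\subseteq D^{\sdual}\setminus C$; no detour through stabilizer codes is made and purity of $Q$ is not used in that computation. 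You instead lift $Q$ to a stabilizer code via the second half of Theorem~\ref{cor:generic}, shrink the stabilizer dimension by one, and return with gauge dimension $r$ via the first half --- the same reductive pattern the paper uses for Theorem~\ref{lem:n-1k+1rule} and the combination theorems. Your route is shorter because all code-level bookkeeping is delegated to quoted results; the paper's route is more self-contained and makes the purity-to-$d$ claim and the inclusion argument for $d_e\ge d$ explicit.

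Two points need fixing before your version is complete. First, the stabilizer-level rule you invoke is the fixed-length dimension-reduction rule (pure $[[n,\kappa,d]]_q$ implies pure $[[n,\kappa-1,\ge d]]_q$); this is \emph{not} \cite[Theorem~70]{ketkar06}, which is the length-shortening rule $[[n-1,\kappa+1,\ge d-1]]_q$ used in Theorem~\ref{lem:n-1k+1rule}. The rule you need does exist (it is the nonbinary analogue of the classical rule in \cite{calderbank98} and appears in \cite{ketkar06} under a different number), so this is a citation error rather than a mathematical one, but as written the step rests on the wrong statement. Second, your reduction genuinely requires $Q$ to be pure, since the subsystem-to-stabilizer direction of Theorem~\ref{cor:generic} is stated only for pure codes, and it requires $k\ge 2$ so that $r<k+r-1$ in the final conversion; the paper's direct construction needs neither hypothesis, so your proof covers the ``(pure)'' reading of the statement but not the impure one, and the $k=1$ case must be argued separately, as you note. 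With the correct lemma cited and the purity and $k\ge 2$ hypotheses made explicit, your argument goes through.
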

\begin{proof}
Existence of the $[[n,k,r,d]]_q$ subsystem code $Q$,  implies
existence of two additive codes $C\leq \F_q^{2n}$ and $D\leq
\F_q^{2n}$ such that $|C|=q^{n-k+r}$ and $|D|=|C \cap
C^{\perp_s}|=q^{n-k-r}$. Furthermore, $d=\min
\swt(D^{\perp_s}\backslash C)$ and $D \subseteq D^\sdual$.

The idea of the proof comes by extending the code $D$ by some
vectors from $D^{\sdual} \backslash (C\cup C^\sdual$). Let us choose
a code $D_e$ of size $|q^{n+1-r-k}|=q|D|$. We also ensure that the
code $D_e$ is self-orthogonal. Clearly extending the code $D$ to
$D_e$ will extend both the codes $C$ and $C^\sdual$ to $C_e$ and
$C_e^\sdual$, respectively. Hence $C_e=q|C|=q^{n+1+r-k}$ and $D_e =
C_e \cap C_e^\sdual$.

There exists a subsystem code $Q_e$ stabilized by the code $C_e$.
The result follows by computing parameters of the subsystem code
$Q_e=A_e \otimes B_e$.

\begin{compactenum}[(i)]
\item $K_e= q^{n}/\sqrt{|C_e||D_e|}=q^{n}/((q^{n+1+r-k})(q^{n+1-k-r}))^{1/2}=q^{k-1}$,
\item $R_e=(|C_e|/|D_e|)^{1/2} = ((q^{n+1}R/K)/(q^{n+1}/RK))^{1/2} = q^r$,
\item $d_e= \swt( (D_e)^\sdual \setminus C_e)\geq \swt( (D^\sdual \setminus C_e))=
d$. If the inequality holds, then the code is pure to $d$.
\end{compactenum}
Arguably, It follows that the set $(D_e^{\sdual}\backslash C_e)$ is
a subset of the set $D^{\sdual}\backslash C$ because $C \leq C_e$,
hence the minimum weight $d_e$ is at least $d$.
\end{proof}


\begin{lemma}\label{lem:reducen-m}
Suppose an $[[n,k,r,d]]_q$ linear pure subsystem code $Q$ exists
generated by the two codes $C,D \leq \F_q^{2n}$. Then there exist
linear $[[n-m,k',r',d']]_q$ and $[[n-m,k'+r'-r'',r'',d']]_q$
subsystem codes with $k'\geq k-m$, $r' \geq r$, $0\leq r''< k'+r'$,
and $d'\geq d$ for any integer $m$ such that there exists a codeword
of weight $m$ in $(D^{\perp_s}\backslash C)$.
\end{lemma}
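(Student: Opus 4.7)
The plan is to shorten both $C$ and $D^{\sdual}$ at the symplectic support of a fixed minimum-weight witness $v\in D^{\sdual}\setminus C$ of symplectic weight exactly $m$, then apply Theorem~\ref{th:oqecfq} to produce the first subsystem code; the second will be extracted from the first by trading dimension between subsystem and co-subsystem via Theorems~\ref{th:FqshrinkR} and~\ref{cor:generic}. Concretely, let $I\subseteq\{1,\dots,n\}$ be the symplectic support of $v$, so $|I|=m$. For any additive $X\le\F_q^{2n}$ write $X_0=\{x\in X : (x_i,x_{n+i})=(0,0)\text{ for all } i\in I\}$, and let $\pi:\F_q^{2n}\to\F_q^{2(n-m)}$ denote the projection that deletes the coordinate pairs indexed by $I$. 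Since $\pi$ is injective and symplectic-weight-preserving on $X_0$, define $\tilde C=\pi(C_0)$ and $\tilde E=\pi((D^{\sdual})_0)$ in $\F_q^{2(n-m)}$, and set $\tilde D=\tilde C\cap \tilde C^{\sdual}$.

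A direct manipulation of the trace-symplectic form shows that puncturing commutes with dualising on the vanish-on-$I$ subcodes, so $\tilde C^{\sdual}=\tilde E$ and in particular $\tilde D\subseteq \tilde C\cap \tilde C^{\sdual}$. Feeding $\tilde C$ and $\tilde D$ into Theorem~\ref{th:oqecfq} then produces an $[[n-m,k',r',d']]_q$ subsystem code whose $k'$ and $r'$ are read off from $|\tilde C|$ and $|\tilde D|$. For the distance bound, take nonzero $\tilde w\in \tilde D^{\sdual}\setminus \tilde C$ and lift it via $\pi$ to some $w\in (D^{\sdual})_0$; if $w$ belonged to $C$ it would already lie in $C_0$, forcing $\tilde w\in \tilde C$, so in fact $w\in D^{\sdual}\setminus C$, and weight preservation gives $\swt(\tilde w)=\swt(w)\ge d$, whence $d'\ge d$. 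Purity of the new code is inherited from $Q$ because the witness $v$ — the only obstruction supported on $I$ — has been deleted by construction.

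The inequalities $k'\ge k-m$ and $r'\ge r$ reduce, through Theorem~\ref{th:oqecfq}, to the size bounds $|\tilde C|\ge q^{n-k+r-m}$ and $|\tilde D|\le q^{n-k-r-m}$, which follow from counting how many of the $2m$ scalar conditions cutting $X$ down to $X_0$ are redundant, using the chain $D\subseteq C\subseteq D^{\sdual}$ together with $v\in D^{\sdual}\setminus C$. Once the first subsystem code is in hand, iterative application of Theorem~\ref{th:FqshrinkR} supplies the second family: each step exchanges one unit of co-subsystem dimension for one unit of subsystem dimension (or vice versa via Theorem~\ref{cor:generic}), preserving both the length $n-m$ and the distance $d'$, so $r''$ may be chosen freely in $[0,k'+r')$ to yield the $[[n-m,k'+r'-r'',r'',d']]_q$ code. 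The main difficulty is the sharp dimension count: a naive argument only gives $k'\ge k-2m$, and recovering the stated $k'\ge k-m$ hinges on using $v$ to certify that exactly $m$ of the $2m$ constraints defining $X_0$ are already implied by the symplectic duality relations among $C$, $D$ and $D^{\sdual}$.
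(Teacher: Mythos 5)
Your direct shortening argument has two genuine gaps. First, the duality step is wrong as stated: for the trace-symplectic form the dual of a shortened code is the \emph{punctured} dual, i.e. $(\pi(C_0))^\sdual=\pi(C^\sdual)$ (projection of all of $C^\sdual$), not the shortened code $\pi((D^\sdual)_0)$ of $D^\sdual=C+C^\sdual$. Consequently $\tilde D^\sdual=\tilde C+\tilde C^\sdual$ contains projections of codewords of $C^\sdual\subseteq D^\sdual$ that are nonzero on $I$; such a projection can have symplectic weight as small as $d-m$ and need not lie in $\tilde C$, so your lifting argument collapses and $d'\ge d$ does not follow from naive shortening. This is precisely why shortening quantum codes needs the weight-$m$ word $v$ for more than choosing $I$: the standard mechanism first enlarges the self-orthogonal code by $v$ (allowed since $v\in D^\sdual$, so $v$ is orthogonal to $D$ and to itself) and only then shortens on the support of $v$, which is what forces the troublesome dual vectors to be handled correctly. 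Second, the dimension count $k'\ge k-m$, $r'\ge r$ is asserted, not proved: cutting $C$ down to $C_0$ imposes up to $2m$ $\F_q$-linear conditions, so without further work you only get $|\tilde C|\ge |C|/q^{2m}$, and your closing sentence concedes that the claimed redundancy of exactly $m$ conditions ``hinges on'' an argument you never supply. The purity claim (``$v$ is the only obstruction supported on $I$'') is likewise unsupported.

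For comparison, the paper's (sketched) proof sidesteps all of this: it maps the pure $[[n,k,r,d]]_q$ subsystem code to a pure $[[n,k+r,d]]_q$ stabilizer code, invokes the stabilizer shortening result of \cite[Theorem~7]{calderbank98} (which packages exactly the enlarge-by-$v$-then-shorten construction, the distance bound, and the sharp dimension count), and then maps back to subsystem codes via Theorem~\ref{cor:generic} to obtain both parameter families, the second by choosing $r''$ in the allowed range. Your use of Theorems~\ref{th:FqshrinkR} and~\ref{cor:generic} for the second family is fine, but the first, length-$(n-m)$ code is not established by your construction; to make a self-contained symplectic proof you would have to reproduce the adjoin-$v$-then-shorten argument and the accompanying bookkeeping rather than shorten $C$ alone.
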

\begin{proof}{[Sketch]}
This lemma~\ref{lem:reducen-m} can be proved easily by mapping the
subsystem code $Q$ into a stabilizer code. By using \cite[Theorem
7.]{calderbank98}, and the new resulting stabilizer code can be
mapped again to a subsystem code with the required parameters.
\end{proof}

\subsection{ \textbf{Combining Subsystem Codes}} We can also
construct new subsystem codes from given two subsystem codes. The
following theorem shows that two subsystem codes can be merged
together into one subsystem code with possibly improved distance or
dimension.

\begin{theorem}\label{thm:twocodes_n1k1r1d1n2k2r2d2}
Let $Q_1$ and $Q_2$ be two pure subsystem codes with parameters
$[[n_1,k_1,r_1,d_1]]_2$ and $[[n_2,k_2,r_2,d_2]]_2$ for $k_2+r_2\leq
n_1$, respectively. Then there exists a subsystem code with
parameters $[[n_1+n_2-k_2-r_2,k_1+r_1-r,r,d]]_2$, where $d \geq min
\{d_1,d_1+d_2-k_2-r_2\}$ and $0 \leq r <k_1+r_1$.
\end{theorem}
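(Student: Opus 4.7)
The plan is to reduce the claim to the stabilizer case by invoking Theorem~\ref{cor:generic} in both directions, sandwiching a known combining rule for stabilizer codes in between. All three ingredients are already available in the paper (or in the cited references \cite{calderbank98,ketkar06}), so the proof is essentially a composition of these results.

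First, I would lift each pure subsystem code to a pure stabilizer code. By the second half of Theorem~\ref{cor:generic}, the pure $[[n_1,k_1,r_1,d_1]]_2$ subsystem code $Q_1$ produces a pure $[[n_1,k_1+r_1,d_1]]_2$ stabilizer code $S_1$, and similarly $Q_2$ yields a pure $[[n_2,k_2+r_2,d_2]]_2$ stabilizer code $S_2$. The hypothesis $k_2+r_2\le n_1$ then becomes the condition that the ``dimension'' of $S_2$ is bounded by the length of $S_1$, which is precisely the premise used in the classical stabilizer-combining rule.

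Next, I would invoke the combining rule for stabilizer codes from \cite{calderbank98,ketkar06}, which states that given two pure stabilizer codes $[[n_1,K_1,d_1]]_2$ and $[[n_2,K_2,d_2]]_2$ with $K_2\le n_1$, one obtains a stabilizer code with parameters $[[n_1+n_2-K_2,\,K_1,\,d]]_2$ where $d\ge \min\{d_1,\,d_1+d_2-K_2\}$. Applied to $S_1$ and $S_2$, this gives a stabilizer code of parameters $[[n_1+n_2-k_2-r_2,\;k_1+r_1,\;d]]_2$ with $d\ge \min\{d_1,\,d_1+d_2-k_2-r_2\}$, matching the target length and total ``$k+r$'' exactly.

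Finally, I would trade dimensions back using the first half of Theorem~\ref{cor:generic}: for every $r$ with $0\le r<k_1+r_1$, the combined stabilizer code above yields an $\F_2$-linear subsystem code with parameters $[[n_1+n_2-k_2-r_2,\,k_1+r_1-r,\,r,\,\ge d]]_2$, which is exactly the family stated in the theorem. The main obstacle is the middle step: one must verify that the stabilizer-combining result from \cite{calderbank98,ketkar06} outputs a code that is pure (or at least pure to~$d$), since the reverse direction of Theorem~\ref{cor:generic} requires purity in order to produce a valid subsystem code with the declared parameters. If the combined stabilizer code is only guaranteed to have distance $\ge \min\{d_1,d_1+d_2-k_2-r_2\}$ without automatic purity, one would either appeal to the pure-to-$d'$ variant of Theorem~\ref{cor:generic} or redo the combining step directly at the level of trace-symplectic codes $C,D\le \F_2^{2n}$, in the spirit of the proof of Theorem~\ref{lemma_n+1k}. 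Once that point is settled, the three rules compose cleanly to yield the claim.
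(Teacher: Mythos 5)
Your proposal is correct and follows essentially the same route as the paper's own proof: lift $Q_1,Q_2$ to pure stabilizer codes $[[n_i,k_i+r_i,d_i]]_2$, apply the combining rule of \cite[Theorem 8]{calderbank98} to get an $[[n_1+n_2-k_2-r_2,k_1+r_1,d]]_2$ stabilizer code, and then convert back to a subsystem code for each $0\le r<k_1+r_1$ via the generic method. Your worry about purity in the last step is not an obstacle here, since the forward direction of Theorem~\ref{cor:generic} applies to any stabilizer code (every code is pure to $1$) and the theorem being proved does not claim the resulting subsystem code is pure.
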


\begin{proof}
Existence of an $[[n_i,k_i,r_i,d_i]]_2$ pure subsystem code $Q_i$
for $i\in\{1,2\}$ , implies existence of a pure stabilizer code
$S_i$ with parameters $[[n_i,k_i+r_i,d_i]]_2$ with $k_2+r_2 \leq
n_1$, see~\cite{aly08a}. Therefore, by~\cite[Theorem
8.]{calderbank98}, there exists  a stabilizer code with parameters
$[[n_1+n_2-k_2-r_2,k_1+r_1,d]]_2$, $d \geq \min
\{d_1,d_1+d_2-k_2-r_2\}$. But this code gives us a subsystem code
with parameters $[[n_1+n_2-k_2-r_2,k_1+r_1-r,r,\geq d]]_2$ with
$k_2+r_2 \leq n_1$ and $0 \leq r <k_1+r_1$ that proves the claim.
\end{proof}

\begin{theorem}\label{lem:twocodes_nk1r1d1nk2r2d2} Let $Q_1$ and $Q_2$ be two
pure subsystem codes with parameters $[[n,k_1,r_1,d_1]]_q$ and
$[[n,k_2,r_2,d_2]]_q$, respectively. If $Q_2\subseteq Q_1$, then
there exists an $[[2n,k_1+k_2+r_1+r_2-r,r,d]]_q$ pure subsystem code
 with minimum distance $d \geq \min \{d_1,2d_2\}$ and $0\leq r <k_1+k_2+r_1+r_2$.
\end{theorem}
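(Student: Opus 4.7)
The plan is to mirror the strategy the author uses for Theorem~\ref{thm:twocodes_n1k1r1d1n2k2r2d2}: lift both subsystem codes to pure stabilizer codes, combine the stabilizer codes with a known classical construction, and then convert the resulting stabilizer code back into a subsystem code via the generic method (Theorem~\ref{cor:generic}).

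First I would invoke Theorem~\ref{cor:generic} (which packages the results of~\cite{aly08a}) to pass from the pure $[[n,k_1,r_1,d_1]]_q$ subsystem code $Q_1$ to a pure $[[n,k_1+r_1,d_1]]_q$ stabilizer code $S_1$, and from the pure $[[n,k_2,r_2,d_2]]_q$ subsystem code $Q_2$ to a pure $[[n,k_2+r_2,d_2]]_q$ stabilizer code $S_2$. The assumption $Q_2\subseteq Q_1$ must then be translated into the corresponding subcode relation $S_2\subseteq S_1$ at the level of stabilizer codewords, which is exactly the definition stated in the paragraph on subcodes preceding Theorem~\ref{th:oqecfq}. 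This is the \emph{key hypothesis} that permits use of the $(u\mid u+v)$-type construction below.

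Next, with $S_1$ and $S_2$ nested pure stabilizer codes of length $n$, I would apply the stabilizer analogue of the Plotkin $(u\mid u+v)$ construction from~\cite[Theorem~9]{calderbank98}. This produces a pure stabilizer code of length $2n$, dimension $(k_1+r_1)+(k_2+r_2)$, and minimum distance at least $\min\{d_1,2d_2\}$; in the paper's notation this is a pure $[[2n,\,k_1+k_2+r_1+r_2,\,\geq \min\{d_1,2d_2\}]]_q$ stabilizer code. The nesting $S_2\subseteq S_1$ is precisely what makes the usual distance bound $\min\{d_1,2d_2\}$ apply.

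Finally I would interpret the resulting stabilizer code as a subsystem code via the second half of Theorem~\ref{cor:generic}: for each $r$ in the range $0\le r<k_1+k_2+r_1+r_2$, the pure stabilizer code of dimension $k_1+k_2+r_1+r_2$ yields a pure subsystem code of parameters $[[2n,\,k_1+k_2+r_1+r_2-r,\,r,\,\ge\min\{d_1,2d_2\}]]_q$, which is the desired statement. The main technical point will be verifying step two: making sure that $Q_2\subseteq Q_1$ in the subsystem-code sense really does produce $S_2\subseteq S_1$ as stabilizer codes, so that the $(u\mid u+v)$ construction of~\cite{calderbank98} can be applied without modification; once this nesting is secured, the dimension and distance arithmetic is routine bookkeeping using the identities in Theorem~\ref{th:oqecfq}.
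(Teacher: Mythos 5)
Your proposal follows essentially the same route as the paper: lift $Q_1,Q_2$ to pure stabilizer codes of dimensions $k_i+r_i$, combine the nested pair by the $(u\mid u+v)$-type stabilizer construction to get a pure $[[2n,k_1+k_2+r_1+r_2,\ge\min\{d_1,2d_2\}]]_q$ code, and convert back to a subsystem code via the generic method. The only difference is the citation for the combining step: for general $q$ the paper uses the nonbinary version (\cite[Lemma 74]{ketkar06}) rather than the $\mathrm{GF}(4)$ result of~\cite{calderbank98}, which is the appropriate reference here.
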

\begin{proof}
Existence of a pure subsystem code with parameters
$[[n,k_i,r_i,d_i]]_q$ implies existence of a pure stabilizer code
with parameters $[[n,k_i+r_i,d_i]]_q$ using~\cite[Theorem
4.]{aly08a}. But by using~\cite[Lemma 74.]{ketkar06}, there exists a
pure stabilizer code with parameters $[[2n,k_1+k_2+r_1+r_2,d]]_q$
with $d \geq \min
 \{2d_2,d_1\}$. By~\cite[Theorem 2., Corollary 6.]{aly08a}, there
 must exist a pure subsystem code with parameters
 $[[2n,k_1+k_2+r_1+r_2-r,r,d]]_q$ where $d \geq \min
 \{2d_2,d_1\}$ and $0\leq r <k_1+k_2+r_1+r_2$, which proves the claim.
\end{proof}

We can recall the trace alternative product between  two codewords
of a classical code and the proof
of~Theorem~\ref{lem:twocodes_nk1r1d1nk2r2d2} can be stated as
follows.

\begin{lemma}\label{lem:twocodes_nk1r1d1nk2r2d2another} Let $Q_1$
and $Q_2$ be two pure subsystem codes with parameters
$[[n,k_1,r_1,d_1]]_q$ and $[[n,k_2,r_2,d_2]]_q$, respectively. If
$Q_2\subseteq Q_1$, then there exists an
$[[2n,k_1+k_2,r_1+r_2,d]]_q$ pure subsystem code
 with minimum distance $d \geq \min \{d_1,2d_2\}$.
\end{lemma}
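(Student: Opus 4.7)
The plan is to imitate the classical $(u\mid u+v)$ Plotkin construction directly on the additive codes that encode the trace-symplectic structure of $Q_1$ and $Q_2$, avoiding the detour through the stabilizer reduction used in Theorem~\ref{lem:twocodes_nk1r1d1nk2r2d2}. By Theorem~\ref{th:oqecfq}, each pure $\F_q$-linear subsystem code $Q_i$ carries additive codes $C_i\le \F_q^{2n}$ and $D_i=C_i\cap C_i^\sdual$ of sizes $|C_i|=q^{n+r_i-k_i}$ and $|D_i|=q^{n-k_i-r_i}$, with $d_i=\swt(D_i^\sdual)$ by purity. First I would restate the hypothesis $Q_2\subseteq Q_1$ at the additive-code level as $D_1\subseteq D_2$ (equivalently $D_2^\sdual\subseteq D_1^\sdual$), which is the standard correspondence for pure $\F_q$-linear subsystem codes and is exactly what the Plotkin-style glueing below needs in order to be well defined.

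Next I would introduce the combined additive code
\[
 C \;=\; \{\,(u\mid u+v)\, :\, u\in C_1,\ v\in C_2\,\} \;\le\; \F_q^{4n},
\]
where the symplectic-block concatenation sends $u=(a\mid b)\in\F_q^{2n}$ and $v=(a'\mid b')\in\F_q^{2n}$ to $(a,a+a'\mid b,b+b')\in\F_q^{4n}$. The routine verifications I would carry out are: (i) $|C|=|C_1|\,|C_2|=q^{2n+(r_1+r_2)-(k_1+k_2)}$; (ii) the trace-symplectic form on $\F_q^{4n}$ splits blockwise, so $C^\sdual$ inherits the same Plotkin shape built from $C_1^\sdual$ and $C_2^\sdual$; and (iii) intersecting gives $D:=C\cap C^\sdual$ as the Plotkin sum of $D_1$ and $D_2$, with $|D|=|D_1|\,|D_2|=q^{2n-(k_1+k_2)-(r_1+r_2)}$. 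Plugging these cardinalities into Theorem~\ref{th:oqecfq} produces $K=q^{k_1+k_2}$ and $R=q^{r_1+r_2}$, matching the claimed $[[2n,k_1+k_2,r_1+r_2,d]]_q$ parameters.

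The main obstacle is the distance bound $d\ge\min\{d_1,2d_2\}$. I would take an arbitrary nonzero $w=(x\mid x+y)\in D^\sdual$ with $x\in D_1^\sdual$ and $y\in D_2^\sdual$ and split into two cases. If $y=0$ then $w=(x\mid x)$ has symplectic weight $2\swt(x)\ge 2d_1$, which is already at least $\min\{d_1,2d_2\}$. If $y\neq 0$ then the containment $D_2^\sdual\subseteq D_1^\sdual$ obtained above forces $y\in D_1^\sdual$, so $(x\mid x+y)=(x\mid x)+(0\mid y)$ sits inside $D_1^\sdual\oplus D_1^\sdual$; a sub-case analysis on whether the obstruction witnessing $w\notin C$ is a $C_1$-obstruction or a $C_2$-obstruction then yields weight $\ge d_1$ in the first situation and weight $\ge 2d_2$ in the second, because in the latter the $C_2$-failure must show up in both blocks simultaneously. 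The delicate point, and precisely where the hypothesis $Q_2\subseteq Q_1$ is genuinely used, is arranging this bookkeeping so that the \emph{doubled} distance is attached to the inner code $Q_2$ and not the outer $Q_1$; this mirrors the symplectic analogue of the analysis for pure stabilizer codes in \cite[Lemma 74.]{ketkar06}, but carried out entirely within the trace-symplectic geometry of $\F_q^{4n}$.
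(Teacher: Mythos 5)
Your overall strategy coincides with the paper's: the paper proves this lemma by exactly the same direct $(u,u+v)$-type construction on the associated additive codes (phrased with the trace-alternating form over $\F_{q^2}$ rather than your trace-symplectic form on $\F_q^{4n}$), computes $|C|=|C_1||C_2|$ and $|D|=|D_1||D_2|$, and reads off $K=K_1K_2$, $R=R_1R_2$ from Theorem~\ref{th:oqecfq}; so your construction and parameter count are not a different route, and the only place you go beyond the paper is the distance argument, which the paper waves off as ``a direct consequence.''

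That added distance argument, however, contains a genuine error. The dual of a Plotkin sum does not have the same Plotkin shape: if $D=\{(u\mid u+v): u\in D_1,\ v\in D_2\}$ blockwise, then the blockwise splitting of the form gives $D^\sdual=\{(a-b\mid b): a\in D_1^\sdual,\ b\in D_2^\sdual\}$ (the paper's $(u'+v'\mid v')$), not your $\{(x\mid x+y): x\in D_1^\sdual,\ y\in D_2^\sdual\}$; these sets differ even under the containment you assume, since your vectors need not have second block in $D_2^\sdual$. Hence your case analysis is carried out over vectors that are in general not elements of $D^\sdual$ at all: the $y=0$ case $(x\mid x)$ of weight $\ge 2d_1$ typically does not lie in $D^\sdual$, and the $y\neq 0$ case is resolved only by an appeal to ``whether the obstruction witnessing $w\notin C$ is a $C_1$-obstruction or a $C_2$-obstruction,'' which is not an argument. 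With the correct shape the three cases are: $b=0$ gives weight $\ge d_1$ (not $2d_1$); $b\neq 0$, $a\neq b$ needs a containment between $D_1^\sdual$ and $D_2^\sdual$ to bound $\swt(a-b)$ from below; and $a=b\neq 0$ gives weight $\ge\max\{d_1,d_2\}$ — that concrete analysis (plus a purity check) is what yields $d\ge\min\{d_1,2d_2\}$, and it is missing. A second non-routine step you defer is the identity $C\cap C^\sdual=\{(u\mid u+v): u\in D_1,\ v\in D_2\}$: membership of $(u\mid u+v)$ in $C^\sdual$ imposes conditions such as $u+v\in C_2^\sdual$ and $2u+v\in C_1^\sdual$, and collapsing these to $u\in D_1$, $v\in D_2$ is exactly where the hypothesis $Q_2\subseteq Q_1$ (cross-orthogonality between the two code pairs) must enter; together with the unjustified translation of $Q_2\subseteq Q_1$ into $D_1\subseteq D_2$, this leaves the proposal unproved at precisely the points the lemma is about.
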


\begin{proof}
Existence of the code $Q_i$ with parameters $[[n,K_i,R_i,d_i]]_q$
implies existence of two additive  codes $C_i$ and $D_i$ for $i \in
\{1,2\}$ such that $ |C_i|=q^nR_i/K_i$ and $|D_i|=|C\cup
C^{\perp_s}|=q^n/R_iK_i$.

 We know that there exist additive linear
codes $D_i \subseteq D_i^\adual$, $D_i \subseteq C_i$, and $D_i
\subseteq C_i^\adual$. Furthermore, $D_i=C_i \cap C_i^\adual$ and
$d_i = wt (D_i^\adual \backslash C_i)$. Also, $C_i = q^{n+r_i-k_i}$
and $|D|=q^{n-r_i-k_i}$.

Using the direct sum definition between to linear codes, let us
construct a code $D$ based on $D_1$ and $D_2$ as

$$D=\{(u,u+v) \mid u \in D_1, v \in D_2\} \le \F_{q^2}^{2n}.$$
The code $D$ has size of $|D|=q^{2n -(r_1+r_2+k_1+k_2)=|D_1||D_2|}.$
Also, we can define the code $C$ based on the codes $C_1$ and $C_2$
as
$$C=\{(a,a+b) \mid a \in C_1, b \in C_2\} \le \F_{q^2}^{2n}.$$
The code $C$ is of size $|C|=|C_1||C_2|=q^{2n+r_1+r_2-k_1-k_2}.$ But
the trace-alternating dual of the code $D$ is

$$D^{\adual}= \{(u'+v'|,v') \mid u' \in D_1^\adual, v' \in D_2^\adual \}.$$
 We notice that $(u'+v',v')$ is orthogonal to $(u,u+v)$ because,
 from properties of the product,
\begin{eqnarray*}\acal{(u,u+v)}{(u'+v',v')}&=&\acal{u}{u'+v'}+
\acal{u+v}{v'}\\ &=& 0\end{eqnarray*} holds for $u\in D_1,v\in
D_2,u'\in D_1^\adual,$ and $v' \in D_2^\adual$.

Therefore, $D \subseteq D^\adual$ is a self-orthogonal code with
respect to the trace alternating product. Furthermore, $C^\adual=
\{(a'+b',b') \mid a' \in C_1^\adual, b' \in C_2^\adual\}.$ Hence,
$C\cap C^\adual = \{(a,a+b) \cap (aa+b',b')\}=D$. Therefore, there
exists an $\F_q$-linear subsystem code $Q=A \otimes B$ with the
following parameters.

\begin{compactenum}[i)]
\item \begin{eqnarray*}K&=&|A|=q^{2n}/(|C||D|)^{1/2}\\&=&
\frac{q^{2n}}{\sqrt{(q^{2n}R_1R_2/K_1K_2)(q^{2n}/K_1K_2R_1R_2)}}\\&=&\frac{q^{2n}}{\sqrt{q^{2n+r_1+r_2-k_1-k_2}q^{2n-r_1-r_2-k_1-k_2}}}\\&=&q^{k_1k_2}=K_1K_2.
\end{eqnarray*}
\item $R=(\frac{|C|}{|D|})^{1/2}=R_1R_2.$
\item the minimum distance is a direct consequence.
\end{compactenum}

\end{proof}
%
%
%

%

\begin{theorem}
If there exist two pure subsystem quantum codes $Q_1$ and $Q_2$ with
parameters $[[n_1,k_1,r_1,d_1]]_q$ and $[[n_2,k_2,r_2,d_2]]_q$,
respectively. Then there exists a pure subsystem code $Q'$ with
parameters $[[n_1+n_2,k_1+k_2+r_1+r_2-r,r,\geq \min (d_1,d_2)]]_q$.
\end{theorem}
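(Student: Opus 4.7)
The plan is to mirror the strategy used in Theorem~\ref{thm:twocodes_n1k1r1d1n2k2r2d2} and Theorem~\ref{lem:twocodes_nk1r1d1nk2r2d2}: lift each subsystem code to a stabilizer code via the generic trading result (Theorem~\ref{cor:generic}), combine the stabilizer codes by a direct sum construction, and then collapse back down to a subsystem code. Concretely, the existence of a pure $[[n_i,k_i,r_i,d_i]]_q$ subsystem code yields, by Theorem~\ref{cor:generic}, a pure $[[n_i,k_i+r_i,d_i]]_q$ stabilizer code $S_i$ for $i\in\{1,2\}$. This reduction sidesteps the need to argue directly about the two pairs $(C_i,D_i)$ of additive codes associated with each $Q_i$ and defers the hard combinatorics to a known stabilizer-code result.

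The next step is to combine $S_1$ and $S_2$ into a single stabilizer code on $n_1+n_2$ qudits. For this I would invoke the direct-sum/concatenation lemma for stabilizer codes (as in the results cited in \cite{ketkar06,calderbank98}), which produces a pure stabilizer code with parameters $[[n_1+n_2,\,k_1+r_1+k_2+r_2,\,\ge \min(d_1,d_2)]]_q$. At the level of the associated self-orthogonal additive codes this is simply the direct sum $C_1\oplus C_2\le \F_q^{2(n_1+n_2)}$: self-orthogonality and cardinality are both preserved, and the symplectic weight of a nonzero element is bounded below by $\min(d_1,d_2)$ since each nonzero component contributes from its own block. Finally, I reinterpret this stabilizer code as a subsystem code by applying Theorem~\ref{cor:generic} (the reverse direction, together with Theorem~\ref{th:FqshrinkR} to split off an $r$-dimensional gauge subsystem), obtaining a pure $[[n_1+n_2,\,k_1+r_1+k_2+r_2-r,\,r,\,\ge\min(d_1,d_2)]]_q$ subsystem code for any $0\le r<k_1+r_1+k_2+r_2$, which is the claim.

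The main obstacle, as in the preceding theorems, is the minimum distance bound together with preservation of purity. In the direct-sum picture the distance argument is cleaner than in Theorem~\ref{thm:twocodes_n1k1r1d1n2k2r2d2} (where one code is embedded into the other) because a nonzero word of $D^{\sdual}\setminus C$ either has a nonzero component in block~$1$, contributing weight at least $d_1$, or in block~$2$, contributing at least $d_2$; hence $\swt\ge\min(d_1,d_2)$. Purity must be verified separately: since each $S_i$ is pure, any nonzero element of $(D_1\oplus D_2)^{\sdual}$ restricted to a single block either vanishes there or has weight at least $d_i$, so the combined code is pure to $\min(d_1,d_2)$. The only genuine subtlety is bookkeeping when extracting the gauge parameter $r$ and making sure the trading step in Theorem~\ref{cor:generic} preserves purity, which it does by construction.
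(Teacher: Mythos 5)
Your proposal is correct and follows essentially the same route as the paper's own (sketch) proof: lift each pure subsystem code to a pure stabilizer code via the trading result, combine the two stabilizer codes by the direct-sum lemma for stabilizer codes (the paper cites \cite[Lemma 73]{ketkar06}), and then convert back to a subsystem code with an $r$-dimensional gauge part via Theorem~\ref{cor:generic}. Your added verification of the distance bound and purity of the direct sum simply fills in details the paper leaves to the cited references.
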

\begin{proof}
This Lemma can be proved easily from~\cite[Theorem 5.]{aly08a}
and~\cite[Lemma 73.]{ketkar06}. The idea is to map a pure subsystem
code to a pure stabilizer code, and once again map the pure
stabilizer code to a pure subsystem code.
\end{proof}

\begin{theorem}
If there exist two pure subsystem quantum codes $Q_1$ and $Q_2$ with
parameters $[[n_1,k_1,r_1,d_1]]_q$ and $[[n_2,k_2,r_2,d_2]]_q$,
respectively. Then there exists a pure subsystem code $Q'$ with
parameters $[[n_1+n_2,k_1+k_2,r_1+r_2,\geq \min (d_1,d_2)]]_q$.
\end{theorem}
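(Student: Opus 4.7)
The plan is to mirror the strategy used in the previous two theorems: pass from subsystem codes to stabilizer codes, use a known combining lemma at the stabilizer level, then pass back to subsystem codes with the desired split of dimensions. All the ingredients are already invoked in the paper (Theorem~\ref{cor:generic} together with the results of aly08a and Lemma 73 of ketkar06), so the proof will essentially be a bookkeeping exercise once the steps are assembled in the right order.

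First, I would apply the generic stabilizer-to-subsystem correspondence (Theorem~\ref{cor:generic} together with \cite[Theorem 5.]{aly08a}) to each $Q_i$. Since $Q_i$ is a pure $[[n_i,k_i,r_i,d_i]]_q$ subsystem code, this yields a pure $[[n_i,k_i+r_i,d_i]]_q$ stabilizer code $S_i$ for $i\in\{1,2\}$. Purity is preserved because the correspondence sends a code that is pure to $d_i$ as a subsystem code to a stabilizer code that is pure to $d_i$.

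Next, I would combine $S_1$ and $S_2$ using \cite[Lemma 73]{ketkar06}, which says that two pure stabilizer codes can be placed in direct sum to produce a pure stabilizer code with parameters
\[
[[n_1+n_2,\,(k_1+r_1)+(k_2+r_2),\,\geq \min(d_1,d_2)]]_q.
\]
The new length is the sum of the two lengths, the new dimension is the sum of the two dimensions, and the new distance is at least the minimum of the two distances, exactly what we need.

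Finally, I would invoke Theorem~\ref{cor:generic} in the reverse direction: starting from a pure $[[n_1+n_2,k_1+k_2+r_1+r_2,\geq \min(d_1,d_2)]]_q$ stabilizer code, for every choice of $r$ with $0\le r<k_1+k_2+r_1+r_2$ we obtain a pure $[[n_1+n_2,(k_1+k_2+r_1+r_2)-r,r,\geq \min(d_1,d_2)]]_q$ subsystem code. Specializing to $r=r_1+r_2$ gives precisely the claimed parameters $[[n_1+n_2,k_1+k_2,r_1+r_2,\geq \min(d_1,d_2)]]_q$. The main (minor) obstacle is justifying that purity is preserved through all three of these translations; I would handle this by observing that at each step the argument only uses the ``pure to~$d$'' clauses that are already built into the cited results, so no separate argument is required.
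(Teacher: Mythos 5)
Your proof is correct in substance, but it is not the route the paper takes for this particular theorem. You pass to stabilizer codes via the second half of Theorem~\ref{cor:generic} (equivalently \cite[Theorem 5]{aly08a}), combine them with \cite[Lemma 73]{ketkar06}, and then reconvert with the generic method specialized to $r=r_1+r_2$; this is essentially the paper's proof of the \emph{preceding} theorem (the $[[n_1+n_2,k_1+k_2+r_1+r_2-r,r,\geq\min(d_1,d_2)]]_q$ statement), of which the present claim is the case $r=r_1+r_2$. The paper instead proves this theorem by a direct construction at the level of the associated classical additive codes: it takes $C=C_1\oplus C_2$ and $D=D_1\oplus D_2$, verifies $D\subseteq C$, self-orthogonality, and $C\cap C^\sdual=D$, and then reads off $K=K_1K_2$, $R=R_1R_2$ and $d\geq\min(d_1,d_2)$ from Theorem~\ref{th:oqecfq}. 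What your route buys is brevity and reuse of already-cited machinery; what the direct construction buys is structural information -- the gauge split $r_1+r_2$ arises naturally as $B\cong B_1\otimes B_2$ rather than being imposed afterwards by gauge-trading, and no appeal to the conversion theorems is needed. Two small caveats for your version: the generic method requires $r<k$, so you need $k_1+k_2>0$ (a degenerate edge case the direct construction avoids), and the purity of the final code rests on the ``pure to $\min\{d,d'\}$'' clause of Theorem~\ref{cor:generic} applied to the pure combined stabilizer code, which is worth stating explicitly rather than asserting that no argument is required.
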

\begin{proof}
Existence of the code $Q_i$ with parameters $[[n,K_i,R_i,d_i]]_q$
implies existence of two additive  codes $C_i$ and $D_i$ for $i \in
\{1,2\}$ such that $ |C_i|=q^nR_i/K_i$ and $|D_i|=|C\cup
C^{\perp_s}|=q^n/R_iK_i$.

 Let us choose the codes $C$ and $D$ as follows.
 $$C=C_1 \oplus C_2=\{uv \mid v \in C_1, v\in C_2\},$$ and $$D=D_1 \oplus D_2=\{ab \mid a\in D_1, b\in
 C_2\},$$
 respectively. From this construction, and since $D_1$ and $D_2$ are
 self-orthogonal codes, it follows that $D$ is also a
 self-orthogonal code. Furthermore, $D_1 \subseteq C_1$ and $D_2
 \subseteq
 C_2$, then

  $$D_1 \oplus D_2 \subseteq C_1 \oplus C_2,$$
hence $D\subseteq C$.  The code $C$ is of size
\begin{eqnarray*}|C|&=&|C_1||C_2|=q^{(n_1+n_2)-(k_1+k_2)+(r_1+r_2)}\\&=&q^{n_1}q^{n_2}R_1R_2/K_1K_2\end{eqnarray*}
and  $D$ is of size
\begin{eqnarray*}|D|&=&|D_1||D_2|=q^{(n_1+n_2)-(k_1+k_2)-(r_1+r_2)}\\&=&q^{n_1}q^{n_2}/R_1R_2K_1K_2.\end{eqnarray*}
On the other hand,
\begin{eqnarray*}C^\sdual=(C_1 \oplus C_2)^\sdual = C_2^\sdual \oplus C_1^\sdual
\supseteq D_2\oplus D_1.\end{eqnarray*} Furthermore, $C\cap
C^\sdual=(C_1 \oplus C_2)\cap (C_2^\sdual \cap C_1^\sdual)=D$.

Therefore, there exists a subsystem code $Q=A\otimes B$ with the
following parameters.
\begin{compactenum}[i)]
\item \begin{eqnarray*}K&=&|A|=q^{n_1+n_2}/(|C||D|)^{1/2}\\&=&
\frac{q^{n_1+n_2}}{\sqrt{(q^{n_1+n_2}R_1R_2/K_1K_2)(q^{n_1+n_2}/K_1K_2R_1R_2)}}\\&=&\frac{q^{n_1+n_2}}{\sqrt{q^{n_1+n_2+r_1+r_2-k_1-k_2}q^{n_1+n_2-r_1-r_2-k_1-k_2}}}\\&=&q^{k_1k_2}=K_1K_2=|A_1||A_2|.
\end{eqnarray*}
\item \begin{eqnarray*}
R&=&(\frac{|C|}{|D|})^{1/2}=\sqrt{\frac{q^{n_1}q^{n_2}R_1R_2/K_1K_2}{q^{n_1}q^{n_2}/R_1R_2K_1K_2}}\\&=&R_1R_2=|B_1||B_2|.\end{eqnarray*}
\item the minimum weight of $D^{\perp_s} \backslash C$ is at least the
minimum weight of $D_1^{\perp_s}\backslash C_1$ or
$D_2^{\perp_s}\backslash C_2$.
\begin{eqnarray*}
d&=&\min \{\swt (D_1^{\perp_s} \backslash C_1),(D_2^{\perp_s}
\backslash C_2)\} \nonumber \\ &\geq& \min \{d_1,d_2\}.
\end{eqnarray*}
\end{compactenum}

\end{proof}

\begin{theorem}
Given two pure subsystem codes $Q_1$ and $Q_2$ with parameters
$[[n_1,k_1,r_1,d_1]]_q$ and $[[n_2,k_2,r_2,d_2]]_q$, respectively,
with $k_2 \leq n_1$. An $[[n_1+n_2-k_2,k_1+r_1+r_2-r,r,d]]_q$
subsystem code exists such that $d\geq \min \{d_1,d_1+d_2-k_2\}$ and
$0\leq r <k_1+r_1+r_2$.
\end{theorem}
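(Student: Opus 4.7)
The plan is to mimic the scheme used for Theorem~\ref{thm:twocodes_n1k1r1d1n2k2r2d2}: translate the two subsystem codes into pure stabilizer codes via Theorem~\ref{cor:generic}, apply a Calderbank-style combining-and-shortening construction on the stabilizer side from~\cite{calderbank98,ketkar06}, and then convert the resulting stabilizer code back into a subsystem code by trading stabilizer dimensions for gauge dimensions.

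Concretely, I would first invoke Theorem~\ref{cor:generic} (equivalently \cite[Theorem~5]{aly08a}) on each $Q_i$ to obtain a pure stabilizer code $S_i=[[n_i,\,k_i+r_i,\,d_i]]_q$ for $i\in\{1,2\}$. The hypothesis $k_2\le n_1$ is the shortening condition needed to apply \cite[Theorem~8]{calderbank98} (equivalently \cite[Lemma~74]{ketkar06}) to $S_1$ and $S_2$; the output I expect from that step is a pure stabilizer code with parameters $[[n_1+n_2-k_2,\,k_1+r_1+r_2,\,d]]_q$ where $d\ge\min\{d_1,d_1+d_2-k_2\}$. The distinguishing feature of this step, in contrast to the analogous step in the proof of Theorem~\ref{thm:twocodes_n1k1r1d1n2k2r2d2}, is that the shortening removes only the $k_2$ logical dimensions of $S_2$ while the $r_2$ gauge dimensions of $Q_2$ are carried over into the combined stabilizer code, which accounts for the extra $r_2$ appearing in the output dimension.

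To finish, I would apply Theorem~\ref{cor:generic} in the reverse direction and, for any $r$ with $0\le r<k_1+r_1+r_2$, repackage the combined stabilizer code as a pure $\F_q$-linear $[[n_1+n_2-k_2,\,k_1+r_1+r_2-r,\,r,\,\ge d]]_q$ subsystem code. The distance inequality is preserved through both conversions and through the stabilizer shortening, since purity is maintained at each step and the final $\swt(D^{\sdual}\setminus C)$ calculation from Theorem~\ref{th:oqecfq} inherits the stabilizer-side bound $d\ge\min\{d_1,d_1+d_2-k_2\}$.

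The main obstacle is identifying the precise variant of the stabilizer combining lemma that yields the stated parameters: the shortening here is by $k_2$ rather than by the full $k_2+r_2$ used in Theorem~\ref{thm:twocodes_n1k1r1d1n2k2r2d2}, so one has to verify that the cited result really can shorten $S_2$ only by its logical dimension and carry the gauge dimensions of $Q_2$ into the combined code. Once that combining step is pinned down, the remaining dimension count via Theorem~\ref{th:oqecfq} and the distance bookkeeping are routine.
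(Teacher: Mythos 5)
Your overall route (convert $Q_1,Q_2$ to pure stabilizer codes, combine/shorten on the stabilizer side, then trade logical for gauge dimensions via Theorem~\ref{cor:generic}) is exactly the route the paper gestures at -- its own ``proof'' is nothing more than the sentence that the result is a direct consequence of the previous theorems. But there is a genuine gap at the step you yourself flag as the ``main obstacle,'' and it is not a bookkeeping issue. Applying \cite[Theorem~8]{calderbank98} (or the corresponding lemma in \cite{ketkar06}) to $S_1=[[n_1,k_1+r_1,d_1]]_q$ and $S_2=[[n_2,k_2+r_2,d_2]]_q$ does \emph{not} produce the code you say you expect: that result shortens by the \emph{full} dimension of the second code, giving $[[n_1+n_2-k_2-r_2,\,k_1+r_1,\,\ge\min\{d_1,d_1+d_2-k_2-r_2\}]]_q$, which is precisely Theorem~\ref{thm:twocodes_n1k1r1d1n2k2r2d2} and not the claimed $[[n_1+n_2-k_2,\,k_1+r_1+r_2,\,\ge\min\{d_1,d_1+d_2-k_2\}]]_q$. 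There is no cited statement that lets you shorten $S_2$ only by its logical dimension $k_2$ while ``carrying along'' the $r_2$ gauge dimensions with the stronger distance bound, and the obvious workaround (first reducing $S_2$ to an $[[n_2,k_2,\ge d_2]]_q$ stabilizer code and then combining) again loses the $r_2$ dimensions, yielding only $k_1+r_1$ in total. So as written, the decisive step of your argument either misquotes the combining lemma or is left unproved.

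What would actually be needed is a construction carried out at the subsystem level: for instance, perform the CRSS-style concatenation by identifying the $k_2$ \emph{information} qudits of $Q_2$ with $k_2$ physical qudits of $S_1$, so that the $r_2$ gauge qudits of $Q_2$ survive as gauge qudits of the combined code; one then has to redo the $\swt(D^{\sdual}\setminus C)$ distance estimate of Theorem~\ref{th:oqecfq} for the combined pair $(C,D)$ to get $d\ge\min\{d_1,d_1+d_2-k_2\}$, and only afterwards apply Theorem~\ref{cor:generic} to redistribute the $k_1+r_1+r_2$ dimensions as $k_1+r_1+r_2-r$ logical plus $r$ gauge. Neither your proposal nor the paper supplies this construction or the accompanying distance argument, so the proof of the stated parameters remains open in both; your write-up is honest about this, but the gap is real and the specific application of \cite[Theorem~8]{calderbank98} in your second paragraph would fail if executed literally.
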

\begin{proof}
The proof is a direct consequence as shown in the previous theorems.
\end{proof}


\begin{theorem} If an $((n,K,R,d))_{q^m}$ pure subsystem code exists, then there exists a
pure subsystem code with parameters $((nm,K,R,\geq d))_q$.
Consequently, if a pure subsystem code with parameters
$((nm,K,R,\geq d))_q$ exists, then there exist a subsystem code with
parameters $((n,K,R,\geq \lfloor d/m \rfloor))_{q^m}$..
\end{theorem}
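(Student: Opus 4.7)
The plan is to reduce to the analogous statement for stabilizer codes via the subsystem-to-stabilizer dictionary (Theorem~\ref{cor:generic} and its $(())$-notation analogue used throughout Section~\ref{sec:tradingdimensions}), and to handle the stabilizer version by a trace-based basis expansion of the additive code given by Theorem~\ref{th:oqecfq}.

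First I would convert the hypothesized pure $((n,K,R,d))_{q^m}$ subsystem code into a pure $((n,KR,d))_{q^m}$ stabilizer code. By Theorem~\ref{th:oqecfq} specialized to the self-dual case, this stabilizer code corresponds to an additive self-orthogonal subcode $D\le \F_{q^m}^{2n}$ with $|D|=(q^m)^n/(KR)$ and $\swt(D^{\sdual}\setminus D)=d$. Next I would fix an $\F_q$-basis $\beta_1,\dots,\beta_m$ of $\F_{q^m}$ together with its trace-dual basis $\gamma_1,\dots,\gamma_m$ (so that $\tr_{q^m/q}(\beta_i\gamma_j)=\delta_{ij}$) and define an $\F_q$-additive expansion $D'\le \F_q^{2nm}$ by writing the first $n$ coordinates of each element of $D$ in the basis $\{\beta_i\}$ and the last $n$ in the basis $\{\gamma_j\}$. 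By construction $|D'|=|D|$ as integers, which is exactly what is needed so that the parameters $K$ and $R$ are preserved when Theorem~\ref{th:oqecfq} is applied to the expanded codes.

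The technical core is the tower identity $\tr_{q^m/p}=\tr_{q/p}\circ\tr_{q^m/q}$ combined with the trace-duality of the paired bases: a short calculation shows that for all $u,v\in \F_{q^m}^{2n}$ the two trace-symplectic forms agree, $\langle \tilde u|\tilde v\rangle_s^{\F_q}=\langle u|v\rangle_s^{\F_{q^m}}$, where $\tilde{\cdot}$ denotes the asymmetric expansion above. This implies $(D')^{\sdual}=\widetilde{D^{\sdual}}$, so $D'$ is self-orthogonal of exactly the right cardinality to define a stabilizer code with parameters $((nm,KR,d'))_q$. The minimum distance obeys $d'\ge d$ because any nonzero $\F_{q^m}$-coordinate expands to a block of $m$ coordinates over $\F_q$ of which at least one must be nonzero, so symplectic weight cannot decrease. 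Applying the subsystem-stabilizer correspondence once more converts this pure stabilizer code into the desired pure $((nm,K,R,\ge d))_q$ subsystem code.

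For the converse I would invert the construction: group each block of $m$ consecutive $\F_q$-coordinates in each half into a single $\F_{q^m}$-coordinate via the inverse basis change. Up to $m$ nonzero $\F_q$-positions can collapse into one nonzero $\F_{q^m}$-position, so the new symplectic weight is at least $\lfloor d/m\rfloor$, and the same bridge through stabilizer codes delivers an $((n,K,R,\ge\lfloor d/m\rfloor))_{q^m}$ subsystem code. The step I expect to be the main obstacle is the verification that the two trace-symplectic forms match exactly under the expansion: using the same basis on both halves generally fails, so the asymmetric choice of one basis for the $a$-half and its trace-dual basis for the $b$-half is essential, and this pairing must be handled carefully so that cardinalities, self-orthogonality, and purity all transfer without loss.
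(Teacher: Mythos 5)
Your proposal is correct and follows essentially the same route as the paper: convert the pure subsystem code into a pure stabilizer code, perform the field expansion (resp.\ contraction) at the stabilizer level, and convert back via the stabilizer--subsystem correspondence. The only difference is that the paper simply cites Lemma~76 of~\cite{ketkar06} for the expansion/contraction step, whereas you prove that step directly with the trace-dual basis calculation, which is precisely the content of the cited lemma.
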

\begin{proof}
Existence of a pure subsystem code with parameters
$((n,K,R,d))_{q^m}$ implies existence of a pure stabilizer code with
parameters $((n,KR,d))_{q^m}$ using~\cite[Theorem 5.]{aly08a}.
By~\cite[Lemma 76.]{ketkar06}, there exists a stabilizer code with
parameters  $((nm,KR,\geq d))_q$. From~\cite[Theorem 2,5.]{aly08a},
there exists a pure subsystem code with  parameters $((nm,K,R,\geq
d))_q$ that proves the first claim.  By~\cite[Lemma 76.]{ketkar06}
and ~\cite[Theorem 2,5.]{aly08a}, and repeating the same proof, the
second claim is a consequence.
\end{proof}

\begin{figure*}[t]
\begin{center}
\begin{supertabular}{ c@{\hspace{0.2cm}}|| c@{\hspace{0.2cm}}| c@{\hspace{0.2cm}}|c@{\hspace{0.3cm}}}
\hline \hline n $\backslash$ k &k-1 &k &k+1\\
 \hline \hline

 n-1 & $[ r+2, d-1]_q$& $[\leq r+2,d]_q$, $[r+1,d-1]_q$& $[r,d-1]_q$\\
\hline
  n & $[ r+1, d]_q$, $[ r+1,\geq d]_q$&  $[r,d]_q$ $\rightarrow [ \leq r, \geq d]_q$&$[ r-1,  d]_q$\\
  &&$ \rightarrow [\geq r,\leq d]_q $&\\
\hline
   n+1 & $[\geq r,\geq d]_q$&$[\geq r,d]_q$ , $[r,\geq d]_q$&\\
 \hline
\end{supertabular}
\caption{Existence of subsystem propagation rules}
\end{center}
\end{figure*}[t]

\section{Special and Short Subsystem Codes $[[8,1,2,3]]_2$ and $[[6,1,1,3]]_3$}\label{Sec:ShortSubsys}
In this section we present the shortest subsystem codes over $\F_2$
and $\F_3$ fields.  Theorem~\ref{cor:generic} implies that a
stabilizer code with parameters $[[n,k,d]]_q$ gives subsystem codes
with parameters $[[n,k-r,r,d]]_q$, see the tables in~\cite{aly08f}.

Consider a stabilizer code with parameters $[[8,3,3]]_2$. This code
can be used to derive $[[8,2,1,3]]_2$ and $[[8,1,2,3]]_2$ subsystem
codes. We give an explicit construction of these codes. Further, we
claim that $[[8,1,2,3]]_2$ and $[[8,2,1,3]]_2$ are the shortest
nontrivial binary subsystem codes.  We show the stabilizer and
normalizer matrices for these codes. Also, we prove their minimum
distances using the weight enumeration of  these codes. We present
two codes with less length, however we can not tolerate more than 2
gauge qubits. The following example shows $[[8,1,2,3]]$ short
subsystem code over $\F_2$.
\begin{example}

\begin{eqnarray}
D_S= \left[ \begin{array} {cccccccc}
X & I & Y & I & Z & Y & X & Z \\
Y & I & Y & X & I & Z & Z & X \\
I & X & Y & Y & Z & X & Z & I \\
I & Y & I & Z & Y & X & X & Z \\
I & I & X & Z & X & Y & Z & Y \\
\end{array} \right]
\end{eqnarray}

\begin{eqnarray}
D^\perp_S= \left[ \begin{array} {cccccccc}
 X & I & I & I & I & I & Z & Y\\
 Y & I & I & I & I & Y & X & X \\
 I & X & I & I & I & Y & Y & X \\
 I & Y & I & I & I & I & X & Z \\
 I & I & X & I & I & Y & Z & I\\
 I & I & Y & I & I & I & Z & X \\
 I & I & I & X & I & Y & I & Z \\
 I & I & I & Y & I & Y & Y & Y\\
 I & I & I & I & X & I & Y & Z\\
 I & I & I & I & Y & Y & Z & Z\\
 I & I & I & I & I & Z & X & Y\\
\end{array} \right]
\end{eqnarray}

\begin{eqnarray}
C_S= \left[ \begin{array} {cccccccc}
X & I & Y & I & Z & Y & X & Z \\
Y & I & Y & X & I & Z & Z & X \\
I & X & Y & Y & Z & X & Z & I \\
I & Y & I & Z & Y & X & X & Z \\
I & I & X & Z & X & Y & Z & Y \\
\hline
 Y & I & I & I & I & Y & X & X \\
 I & X & I & I & I & Y & Y & X \\
\end{array} \right]
\end{eqnarray}

\begin{eqnarray}
C^\perp_S= \left[ \begin{array} {cccccccc}
X & I & Y & I & Z & Y & X & Z \\
Y & I & Y & X & I & Z & Z & X \\
I & X & Y & Y & Z & X & Z & I \\
I & Y & I & Z & Y & X & X & Z \\
I & I & X & Z & X & Y & Z & Y \\
\hline
X & I & I & I & I & I & Z & Y\\
 I & I & I & Y & I & Y & Y & Y\\
\end{array} \right]
\end{eqnarray}

We notice that the matrix $D_S$ generates the code $D=C \cap
C^{\perp_s}$. Furthermore, dimensions of the subsystems $A$ and $B$
are given by $k=\dim D^{\perp_s}- \dim C=(11-7)/2=2$ and $r= \dim C
- \dim D=(7-5)/2=1$. Hence we have $[[8,2,1,3]]_2$ and
$[[8,1,2,3]]_2$ subsystem codes.
\end{example}

We show that the subsystem codes $[[8,1,2,3]]_2$ is not better than
the stabilizer code $[[8,3,3]]_2$ in terms of syndrome measurement.
The reason is that the former needs $8-1-2=5$ syndrome measurements,
while the later needs also $8-3=5$ measurements. This is an obvious
example where subsystem codes have no superiority in terms of
syndrome measurements.

We post an open question regarding the threshold value and fault
tolerant gate operations for this code. We do not know at this time
if the code $[[8,1,2,3]]_2$ has better threshold value and less
fault-tolerant operations. Also, does the subsystem code with
parameters  $[[8,1,3,3]]_2$ exist?

\textbf{No nontrivial $[[7,1,1,3]]_2$ exists.} There exists a
trivial $[[7,1,1,3]]_2$ code obtained by simply extending the
$[[7,1,3]]_2$ code as the $[[5,1,3]]_2$ code. We show the smallest
subsystem code with length $7$ must have at most  minimum weight
equals to 2. Since $[[7,2,2]]_2$ exists, then we can construct the
stabilizer and normalizer matrices as follows.

\begin{eqnarray}
D_S= \left[ \begin{array} {ccccccc}
 X & X & X & X & I & I & I\\
 Y & Y & Y & Y & I & I & I \\
 I & I & I & I & X & I & I \\
 I & I & I & I & I & X & I \\
 I & I & I & I & I & I & X\\
\end{array} \right]
\end{eqnarray}

\begin{eqnarray}
D^\perp_S= \left[ \begin{array} {ccccccc}
X & I & I & X & I & I & I \\

Y & I & I & Y & I & I & I \\

I & X & I & X & I & I & I \\

I & Y & I & Y & I & I & I \\

I & I & X & X & I & I & I \\

I & I & Y & Y & I & I & I \\

I & I & I & I & X & I & I\\

I & I & I & I & I & X & I\\

I & I & I & I & I & I & X\\

\end{array} \right]
\end{eqnarray}
Clearly, from our construction and using Theorem~\ref{cor:generic},
there must exist a subsystem code with parameters $k$ and $r$ given
as follows. $\dim D^{\perp_s}=9/2$ and $\dim C=7/2$. Also, $\dim
D=5/2$ and $\ min (D^{\perp_s} \backslash C)=2$. Therefore, ,
$k=(9-7)/2=1$ and $r=(7-5)/2=1$. Consequently, the parameters of the
subsystem code are $[[7,1,1,2]]_2$.

\smallskip

This example shows $[[6,1,1,3]]$ short subsystem code over $\F_3$.
\begin{example}
We give a nontrivial short subsystem code over $\F_3$. This is
derived from the $[[6,2,3]]_3$ graph quantum code. Also, we show
in~\cite{aly08f} an example  for an $[[6,1,1,3]]_7$ subsystem code
over $\F_7$. Consider the field $\F_3$ and let $C\subseteq
\F_3^{12}$ be a linear code defined by the following generator
matrix.
\begin{eqnarray*}
C=\left[ \begin{array}{rrrrrr|rrrrrr}
1&0&0&0&2&0&0&2&0&2&0&2\\
0&1&0&0&0&2&1&0&1&0&1&0\\
0&0&1&0&2&0&0&1&0&1&0&1\\
0&0&0&1&0&2&2&0&2&0&2&0\\
\hline
0&0&0&0&1&0&0&2&0&1&0&0\\
0&0&0&0&0&0&1&0&1&0&1&0\\
\end{array}\right]=\left[\begin{array}{c} S\\ \hline X_1\\Z_1\\
\end{array}\right].
\end{eqnarray*}
Let the symplectic inner product $\langle (a|b)|(c|d)\rangle_s =
a\cdot d - b\cdot c$. Then the symplectic dual of $C$ is generated
by
\begin{eqnarray*}
C^\sdual=\left[\begin{array}{c}  S\\ \hline X_2\\Z_2\\
\end{array}\right],
\end{eqnarray*}
where $X_2=\big[ \begin{array}{rrrrrr|rrrrrr}
0&0&0&0&0&1&1&0&2&0&0&0\\
\end{array} \big]$ and\\ $Z_2=\big[ \begin{array}{rrrrrr|rrrrrr}
0&0&0&0&0&0&0&1&0&1&0&1\\
\end{array} \big]$. The matrix $S$ generates the code $D=C\cap
C^\sdual$. Now $D$ defines a $[[6,2,3]]_3$ stabilizer code.
Therefore, $\swt(D^\sdual\setminus D)=3$. It follows that
$\swt(D^\sdual\setminus C) \geq \swt(D^\sdual)=3$. By
\cite[Theorem~4]{aly06c}, we have a $[[6,(\dim D^\sdual-\dim
C)/2,(\dim C-\dim D)/2, 3]]_3$ viz. a $[[6,1,1,3]]_3$ subsystem
code.
\end{example}

We can also have a trivial $[[6,1,1,3]]_2$ code.  This trivial
extension seems to argue against the usefulness of subsystem codes
and if they will really lead to improvement in performance. An
obvious open question is if there exist nontrivial $[[6,1,1,3]]_2$
or $[[7,1,1,3]]_2$ subsystem codes.

\newcommand{\XXstud}{{}}
\newcommand{\XXar}[1]{}
\bibliographystyle{plain}

\begin{thebibliography}{1}

\bibitem{aly08f}
S.~A. Aly and A.~Klappenecker.
\newblock Structures and constructions of subsysem codes over finite fields.
\newblock {\em Phys. Rev.~A.}, 2008.
\newblock on submission.

\bibitem{aly08a}
S.~A. Aly and A.~Klappenecker.
\newblock Subsystem code constructions.
\newblock In {\em Proc. 2008 IEEE International Symposium on Information
  Theory, Toronto, Canada}, Submitted, 2008.

\bibitem{aly06c}
S.~A. Aly, A.~Klappenecker, and P.~K. Sarvepalli.
\newblock Subsystem codes.
\newblock In {\em 44th Annual Allerton Conference on Communication, Control,
  and Computing, Monticello, Illinois, September, 2006}, 2006.

\bibitem{calderbank98}
A.R. Calderbank, E.M. Rains, P.W. Shor, and N.J.A. Sloane.
\newblock Quantum error correction via codes over {GF}(4).
\newblock {\em IEEE Trans. Inform. Theory}, 44:1369--1387, 1998.

\bibitem{ketkar06}
A.~Ketkar, A.~Klappenecker, S.~Kumar, and P.~K. Sarvepalli.
\newblock Nonbinary stabilizer codes over finite fields.
\newblock {\em IEEE Trans. Inform. Theory}, 52(11):4892--4914, 2006.

\end{thebibliography}

\end{document}